\numberwithin{equation}{section}
\newcommand{\R}{{\mathbb R}}
\newcommand{\be}{\begin{eqnarray}}
\newcommand{\ben}{\begin{eqnarray*}}
\newcommand{\en}{\end{eqnarray}}
\newcommand{\enn}{\end{eqnarray*}}
\newcommand{\pa}{\partial}
\newcommand{\ov}{\overline}
\newcommand{\curl}{{\rm curl\,}}
\newcommand{\divv}{{\rm div\,}}
\newcommand{\G}{\Gamma}
\newtheorem{theorem}{Theorem}[section]
\newtheorem{lemma}[theorem]{Lemma}
\newtheorem{remark}[theorem]{Remark}
\definecolor{rot}{rgb}{1.000,0.000,0.000}
\definecolor{rot1}{rgb}{0.000,0.000,0.000}
\begin{document}
\renewcommand{\theequation}{\arabic{section}.\arabic{equation}}
\begin{titlepage}
\title{Regularized integral equation methods for \\ elastic scattering problems in three dimensions}

\author{Oscar P. Bruno\thanks{Department of Computing \& Mathematical
    Sciences, California Institute of Technology, 1200 East California
    Blvd., CA 91125, United States. Email:{\tt obruno@caltech.edu}}\;
  and Tao Yin\thanks{Department of Computing \& Mathematical Sciences,
    California Institute of Technology, 1200 East California Blvd., CA
    91125, United States. Email:{\tt taoyin89@caltech.edu}}}
\end{titlepage}
\maketitle

\begin{abstract}
  This paper presents novel methodologies for the numerical simulation
  of scattering of elastic waves by both closed and open surfaces in
  three-dimensional space. The proposed approach utilizes new integral
  formulations as well as an extension to the elastic context of the
  efficient high-order singular-integration methods~\cite{BG18}
  introduced recently for the acoustic case. In order to obtain
  formulations leading to iterative solvers (GMRES) which converge in
  small numbers of iterations we investigate, theoretically and
  computationally, the character of the spectra of various operators
  associated with the elastic-wave Calder\'on relation---including
  some of their possible compositions and combinations. In particular,
  by relying on the fact that the eigenvalues of the composite
  operator $NS$ are bounded away from zero and infinity, new
  uniquely-solvable, low-GMRES-iteration integral formulation for the
  closed-surface case are presented. The introduction of corresponding
  low-GMRES-iteration equations for the open-surface equations
  additionally requires, for both spectral quality as well as accuracy
  and efficiency, use of weighted versions of the classical integral
  operators to match the singularity of the unknown density at
  edges. Several numerical examples demonstrate the accuracy and
  efficiency of the proposed methodology.

{\bf Keywords:} Elastic waves, combined field integral equations, Calder\'on relation, hyper-singular operator, high-order methods
\end{abstract}

\section{Introduction}
\label{sec:1}

This paper introduces efficient high-order integral solvers for
three-dimensional (3D) problems of elastic scattering by surfaces,
including both {\it closed surfaces} and {\it infinitely thin open
  surfaces}. These are problems of significant importance in many
application fields in science and engineering, including geophysics,
seismology, non-destructive testing, energy and material science,
among many others.  Unlike the finite-element and finite-difference
approximations of the associated partial differential
equations~\cite{BHSY}, which require 3D (volumetric) discretizations
and use of appropriate absorbing boundary conditions, the boundary
integral methods only require discretization of the two-dimensional
(2D) domain boundaries~\cite{CK98,HW08,KGBB79,N01} and they
automatically enforce the radiation condition at infinity.  The
significant benefits inherent in the reduced dimensionality of the
boundary-integral methods can be fully realized, in spite of the dense
matrices they entail, provided adequate acceleration techniques are
used for the associated matrix-vector products (see
e.g.~\cite{BK01,BET12,L09} and references therein) together with
Krylov-subspace linear algebra solver like GMRES. In all, the BIE
method has lead to fast and high-order algorithms that, for problems
beyond a small number of wavelengths in size, can outperform their
volumetric domain discretization counterparts to very significant
extents.

For definiteness, this contribution focuses on the elastic Neumann
(traction) boundary-value problem, whose treatment by means of
boundary integral methods has been found quite challenging, but the
proposed methods extend directly to the somewhat less challenging
elastic problems with Dirichlet (displacement) boundary
conditions. For the Neumann problem of scattering by closed-surfaces
the proposed method represents the elastic-field on the basis of a
combination of single-layer and double-layer potentials~\cite{CK98},
which ensures the validity of the critical property of unique
solvability; the resulting integral equation includes contributions
from the tractions of both the single-layer and double-layer
potentials---which result in strongly singular and hyper-singular
kernels, respectively, and which, unlike the single layer operator
(whose kernel is weakly singular), are only defined in the sense of
Cauchy principle value and Hadamard finite part~\cite{HW08},
respectively.  For the problem of scattering by open-surfaces, in
turn, a representation leading to a hyper-singular integral operator
is used~\cite{AD95,CKM00}. In both cases we propose an efficient
high-order singular-integration method that extends the
``rectangular-polar'' methodology~\cite{BG18} introduced recently for
the acoustic case, and which, as demonstrated below in this paper, can
efficiently produce solutions of very high accuracy.

The presence of the elastic hyper-singular operator in the
integral-equation formulations presents difficulties concerning
spectral character and accurate operator evaluation, both of which
arise from the highly singular character of the associated integral
kernel. Indeed, as it is well known, the eigenvalues of the
hyper-singular operators accumulate at infinity and, hence, the
solution of these integral equations by means of the GMRES solver
often requires large numbers of iterations for convergence---and thus,
large computing costs, specially for 3D problems. On the other hand,
the very evaluation of the associated hyper-singular integrals for
both open- and closed-surface problems, that must be interpreted in
the sense of Hadamard finite part, has also remained a significant
challenge~\cite{HW08,BLR14}. When combined with the iterative
linear-algebra solver GMRES, the proposed combination of a spectrally
regularized formulation and novel and effective high-order singular
quadratures gives rise to efficient and highly accurate solvers for
the elastic-wave problems at hand. We suggest that the use of the
aforementioned acceleration techniques, which can directly be applied
in conjunction with the formulation and singular-quadrature methods
presented in this paper, would lead to accurate and efficient solvers
for high-frequency elastic-scattering problems as well.

A number of strategies have been developed, in the context of acoustic
and electromagnetic scattering, for reduction of the number of GMRES
iterations required for convergence to a given accuracy. Unlike the
algebraic preconditioners \cite{BT98,CDGS05} and formulations based on
pseudoinverses \cite{AD05,AD07}, the novel methodologies proposed in
\cite{BET12} rely on the acoustic Calder\'on relation and only require
use of a regularizing operator of a form similar to a single-layer
operator, leading to regularized integral equations that are of the
desired second-kind Fredholm type. In both cases the regularization
technique preserves the unique solvability properties of the classical
(unregularized) integral equations upon which they are based.

The extension of these methodologies to elastic scattering problems
presents certain challenges. At a basic level, elastic-wave Calder\'on
formulas have not been studied in detail for either closed-surface or
open-surface cases---possibly on account of the fact that, in contrast
with the acoustic wave case, the classical double-layer operator $K$
and its adjoint $K'$ (which play important roles in the Calder\'on
relations) are not compact in the elastic case~\cite{AJKKY,AKM}. The
2D elastic Calder\'on formula for the closed-surface case was
investigated recently~\cite{BXY191}. On the basis of the polynomial
compactness of the operators $K$ and $K'$ it was shown that the
composition $NS$ of the single-layer and hyper-singular integral
operators can be expressed as the sum of a multiple of the identity
operator and a compact operator. The closed-surface 2D analysis does
not directly translate to the 3D context in view of certain
differences in the detailed character of the polynomial compactness of
the operators $K$ and $K'$ in the 3D~\cite{AKM} and 2D~\cite{AJKKY}
cases, but, as shown in Section~\ref{sec:3}, the eigenvalues of the
composition $NS$ in 3D are bounded away from zero and
infinity. Analyses based on principal symbols, such as the one
presented in~\cite{DL15} for the on-surface radiation-condition
regularization method in the Dirichlet case, could conceivably be
applied to study the spectral regularity of the three-dimensional
elastic operator $NS$, but such approaches have not as yet been
pursued. We are not aware of previous applications of spectral
regularization methods to integral equations for the elastic Neumann
problem.

Elastic versions of the Calder\'on formulas for open surfaces are not
known at present.  In view of the acoustic open-surface Calder\"on
relations~\cite{BL12,BL13,LB15} and the related study~\cite{BXY191}
for the 2D open-arc elastic case, we consider ``weighted'' versions
$S_w$ and $N_w$ of the single-layer and hyper-singular operators
which, like those considered previously for 2D elastic and 2D and 3D
acoustic open surface problems, extract the solutions' edge
singularity explicitly. In view of these contributions and the
spectral properties, established in the present paper for the 3D
closed-surface elastic case, we additionally consider a formulation of
the 3D open-surface elastic problem in terms of the composition
$N_wS_w$. The benefits of this approach are two-fold:
high-order accuracy (that is achieved in our implementations by means
of the aforementioned rectangular-polar quadrature method) and
well-behaved iterative linear algebra. Our numerical tests suggest
that the eigenvalues of $N_wS_w$ are at least bounded away from
infinity, and although they appear to approach the origin
(Fig.\ref{eig3}), the $N_wS_w$ formulation leads, as desired, to
significant reductions in the number of GMRES iterations required for
convergence to a given residual tolerance over those required by the
operator $N_w$.

As indicated above, our implementations rely on the Chebyshev-based
rectangular-polar discretization methodology developed
recently~\cite{BG18}---which can be readily applied in conjunction
with geometry descriptions given by a set of arbitrary non-overlapping
logically-quadrilateral patches, and which, therefore, makes the
algorithm particularly well suited for treatment of complex
CAD-generated geometries. The algorithms additionally rely on use of
expressions, presented in~\cite{BXY19,L14,YHX} for the closed-surface
case, that present 3D elastic strongly-singular and hyper-singular
operators as compositions of weakly singular integrals and
tangential-derivative operators; the corresponding expressions for the
weighted operators we use in the open-surface case are presented in
Lemma~\ref{RegOpen} below.  The application of the Chebyshev-based
rectangular-polar solver for the evaluation of the weakly singular
integrals gives rise to high accuracy and efficiency. Thanks to the
use of Cartesian-product Chebyshev discretizations, further, the
needed tangential differentiations can easily be effected via
differentiation of corresponding truncated Chebyshev expansions.

This paper is organized as follows. After preliminaries and notations
are laid down in Section \ref{sec:2.1}, Sections~\ref{sec:2.2}
and~\ref{sec:2.3} introduce the classical integral equations for the
closed-surface and open-surface problems under consideration,
respectively.  Section~\ref{sec:3.1} investigates the spectral
properties of the elastic integral operators and the 3D Calder\'on
relation. The new regularized integral equations for closed and open
surfaces are derived in Sections~\ref{sec:3.2} and~\ref{sec:3.3},
respectively. Exact re-expressed formulations for the
strongly-singular and hyper-singular operators are presented in
Section~\ref{sec:3.4}. The high order discretization method we use for
numerical evaluation of the elastic integral operators are briefly
described in Section~\ref{sec:4}. The numerical examples presented in
Section~\ref{sec:5}, finally, demonstrate the high-accuracy and
high-order of convergence enjoyed by the proposed approach, as well as
the reduced numbers of GMRES linear-algebra iterations required by the
proposed algorithms for convergence to a given residual tolerance.

\section{Elastic scattering problems and integral equations}
\label{sec:2}


\subsection{Preliminaries}
\label{sec:2.1}

We consider the problems of scattering of elastic waves by bounded
obstacles $\Omega$ whose smooth boundaries $\Gamma$ are either open or
closed surfaces---that is, they are two-dimensional sub-manifolds of
$\mathbb{R}^3$ with or without boundary, respectively. Noting that in
the open-surface case we have $\Omega=\Gamma$, for both the open- and
closed-surface cases the propagation domain will be denoted by
$D:=\R^3\backslash\ov{\Omega}$. We assume that $D$ is occupied by a
linear isotropic and homogeneous elastic medium characterized by the
Lam\'e constants $\lambda$ and $\mu$ (satisfying $\mu>0$,
$3\lambda+2\mu>0$) and the mass density $\rho>0$. As indicated in
Section~\ref{sec:1}, our derivations are restricted to the challenging
Neumann case, in which the boundary traction is
prescribed. Suppressing the time-harmonic dependence $e^{-i\omega t}$
in which $\omega>0$ is the frequency, the displacement field
$u=(u^1,u^2,u^3)^\top$ in the solid (where $a^\top$ denotes
transposition of the vector or matrix $a$) can be modeled by the
following boundary value problem: Given the boundary data $F$ on
$\Gamma$, determine the scattered field $u$ satisfying \be
\label{Navier}
\Delta^{*}u +   \rho \omega^2u &=& 0\quad\mbox{in}\quad D,\\
\label{BoundCond}
T(\pa,\nu)u &=& F \quad \mbox{on}\quad \Gamma,
\en
and the Kupradze radiation condition (\cite{KGBB79})
\be
\label{RadiationCond}
\lim_{r \to \infty} r\left(\frac{\partial u_t}{\partial r}-ik_tu_t\right) = 0,\quad r=|x|,\quad t=p,s,
\en
uniformly with respect to all $\hat{x}=x/|x|\in\mathbb{S}^2:=\{x\in\R^3:|x|=1\}$. Here, $\Delta^{*}$ denotes the Lam\'e operator
\ben
\label{LameOper}
\Delta^* := \mu\,\mbox{div}\,\mbox{grad} + (\lambda +
\mu)\,\mbox{grad}\, \mbox{div}\,, \enn and $T(\pa,\nu)$ denotes the
boundary-traction operator
\ben
\label{stress-3D}
T(\pa,\nu)u:=2 \mu \, \partial_{\nu} u + \lambda \, \nu \, \divv u+\mu
\nu\times \curl u,\quad \nu=(\nu^1,\nu^2,\nu^3){^\top}, \enn where
$\nu$ is the outward unit normal to the boundary $\G$ and
$\partial_\nu:=\nu\cdot\mbox{grad}$ is the normal derivative. In
(\ref{RadiationCond}), $u_p$ and $u_s$ denote the compressional and
the shear waves, respectively, which are given by \ben
u_p=-\frac{1}{k_p^2}\,\mbox{grad}\,\mbox{div}\;u,\quad
u_s=\frac{1}{k_s^2}\,\mbox{curl}\,\mbox{curl}\;u, \enn where the wave
numbers $k_s,k_p$ are defined as \ben k_s := \omega/c_p,\quad k_p :=
\omega/c_s, \enn with \ben c_p:=\sqrt{\mu/\rho},\quad
c_s:=\sqrt{(\lambda+2\mu)/\rho}.  \enn If the scattered field is
induced by an incident displacement field $u^{inc}$ (e.g. a plane wave
or point source), then the boundary data is determined by
$F=-T(\pa,\nu)u^{inc}$.

The fundamental displacement tensor for the time-harmonic Navier
equation (\ref{Navier}) in $\R^3$ is given by
\be
\label{NavierFS}
E(x,y) = \frac{1}{\mu}\gamma_{k_s}(x,y) I + \frac{1}{\rho\omega^2}\nabla_x\nabla_x^\top \left[\gamma_{k_s}(x,y) - \gamma_{k_p}(x,y)\right],\quad x\ne y,
\en
where
\be
\label{HelmholtzFS}
\gamma_{k_t}(x,y) =\frac{\exp(ik_t|x-y|)}{4\pi|x-y|}, \quad x\ne
y,\quad t=p,s.
\en
is the fundamental solution of the Helmholtz equation in $\R^3$ with wave number $k_t$. Relying on the $p$-wave and $s$-wave Helmholtz
Green functions~\eqref{HelmholtzFS}, Sections~\ref{sec:2.2}
and~\ref{sec:2.3} present the classical indirect boundary integral
equations for the traction problems of scattering by closed and open
surfaces, respectively.

\subsection{Boundary integral equations I: closed-surface case}
\label{sec:2.2}
The classical indirect combined field integral equation formulation
assumes a representation of the scattered field given by a combination
of the form \cite{HW08}
\be
\label{sol1}
u(x)=(\mathcal{D}-i\eta\mathcal{S})(\varphi)(x), \quad x\in D, \en of
a single and a double-layer potential expressions $\mathcal{D}$ and
$\mathcal{S}$ given by
\be
\label{dl}
\mathcal{D}(\varphi)(x) &=& \int_{\Gamma} \left(T(\pa_y,\nu_y)E(x,y)\right)^\top \varphi(y)\,ds_y, \\
\label{sl}
\mathcal{S}(\varphi)(x) &=& \int_{\Gamma} E(x,y)\varphi(y)\,ds_y, \en
respectively. Operating with the traction operator on (\ref{sol1}),
taking the limit as $x\to\Gamma$ and using well-known jump
relations~\cite{HW08} to apply the boundary condition, the combined
field integral equation \be
\label{BIE1}
\left[i\eta \left(\frac{I}{2}-K'\right)+N\right](\varphi)= F
\quad\mbox{on}\quad \Gamma \en results. Here $I$ denotes the identity
operator, and $K':H^{s}(\Gamma)^3\rightarrow H^{s}(\Gamma)^3$ and
$N:H^{s}(\Gamma)^3\rightarrow H^{s-1}(\Gamma)^3$ denote the boundary
integral operators \be
\label{BIOK}
K'(\varphi)(x) &=& \int_\Gamma T(\pa_x,\nu_x)E(x,y) \sigma(y)\,ds_y, \quad x\in\Gamma,\quad\mbox{and}\\
\label{BION}
N(\varphi)(x) &=& \int_\Gamma T(\pa_x,\nu_x)
\left(T(\pa_y,\nu_y)E(x,y)\right)^\top u(y)\,ds_y, \quad x\in\Gamma,
\en
which are only defined in the sense of Cauchy principle value and
Hadamard finite part~\cite{HW08} respectively, in view of the strongly
singular and hyper-singular character of the corresponding kernels.
It can be shown that the integral equation (\ref{BIE1}) is uniquely
solvable for all real values of the frequency $\omega>0$ (see
e.g.~\cite{BXY}). But, as it is well known, the eigenvalues of the
hypersingular integral operator $N$ accumulate at infinity. As a
result, the solution of the integral equation (\ref{BIE1}) by means of
Krylov-subspace iterative solvers such as GMRES generally requires
large numbers of iterations.

\subsection{Boundary integral equations II: open-surface case}
\label{sec:2.3}

For the open-surface scattering problem the solution can be expressed
as a double-layer potential \be
\label{sol2}
u(x)=\mathcal{D}(\varphi)(x), \quad x\in D.  \en Operating with the
traction operator on (\ref{sol2}), taking the limit as $x\to\Gamma$
and applying the boundary condition, we obtain the boundary integral
equation \be
\label{BIE2}
N(\varphi)= F \quad\mbox{on}\quad \Gamma.  \en In addition to the
computational challenge inherent in the accurate integration of the
hypersingular kernel of the operator $N$, for the open-surface case
the solution $\varphi$ is itself singular at the edge of $\Gamma$, as is well
known---which leads to numerical methods of low order of accuracy
unless the algorithm appropriately accounts for the solution
singularity.

\section{Regularized boundary integral equations}
\label{sec:3}

In this section, we propose the regularized integral equations for the closed and
open surface scattering problems. Here, three types of ``regularization'' are employed:
\begin{itemize}
\item[I.] {\it Form of integral equation regularization}: deriving new integral equations (Sections~\ref{sec:3.2} and \ref{sec:3.3}) on a basis of the spectral properties of the composition of single layer operator and hyper-singular operator (Section~\ref{sec:3.1});
\item[II.] {\it Solutions' edge singularity regularization for open-surface cases}: introducing a weight function to extract the solutions' edge singularity explicitly (Section~\ref{sec:3.3});
\item[III.] {\it Strong-singularity and hyper-singularity regularization}: Re-expressing the strongly singular and hyper-singular integral operators into compositions of weakly-singular integral operators and differentiation operators in directions tangential to the surface (Section~\ref{sec:3.4}).
\end{itemize}

\subsection{Operator spectra}
\label{sec:3.1}

Seeking to derive  regularized boundary integral equations which
do not suffer from the difficulties described in the previous section,
we first study the spectra of the integral operators $K'$ and
the composite operator $N S$ where $S:H^s(\Gamma)^3 \rightarrow H^{s+1}(\Gamma)^3$ denotes the single-layer
operator
 \be
\label{BIOS}
S(\varphi)(x) &=& \int_\Gamma E(x,y) \sigma(y)\,ds_y, \quad
x\in\Gamma.
\en
Our study for wave-scattering problems relies on the
following result for zero-frequency (static) elasticity.
\begin{theorem}{\cite[Theorem 2.1, 2.2]{AKM}}
\label{theorem1}
Let $\Gamma$ denote a smooth closed surface in three-dimensional
space, let $K_0'$ denote the adjoint of the elastic double-layer
operator in the zero-frequency case $\omega=0$ and let
$P_3(t)=t(t^2-C_{\lambda,\mu}^2)$ where $C_{\lambda,\mu}$ is a
constant that depends on the Lam\'e parameters: \ben
C_{\lambda,\mu}=\frac{\mu}{2(\lambda+2\mu)}.  \enn Then
$P_3(K_0'):H^{-1/2}(\Gamma)^3\rightarrow H^{-1/2}(\Gamma)^3$ is compact. Furthermore, the spectrum of $K_0'$ consists
of three non-empty sequences of eigenvalues which converge to 0,
$C_{\lambda,\mu}$ and $-C_{\lambda,\mu}$, respectively.
\end{theorem}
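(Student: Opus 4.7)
The plan is to prove the two assertions in tandem: first, that $P_3(K_0')$ is compact on $H^{-1/2}(\Gamma)^3$, by recognizing $K_0'$ as a zero-order pseudodifferential operator and showing that its principal symbol is annihilated by $P_3$; second, that the full spectrum of $K_0'$ consists of three sequences of eigenvalues clustering at the roots of $P_3$, which will follow from polynomial compactness combined with an explicit spectral computation on a model geometry.

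The first step is to verify that $K_0'$ is a classical pseudodifferential operator of order $0$ on $\Gamma$. Writing the static Kelvin tensor $E_0(x,y)$ in terms of $|x-y|^{-1}$ and its derivatives, and then applying the traction $T(\partial_x,\nu_x)$ inside the integral over $\Gamma$, one sees that $K_0'$ has a Calder\'on–Zygmund kernel of Cauchy principal-value type whose leading singularity is of order $|x-y|^{-2}$ on the $2$-dimensional surface $\Gamma$; standard parametrix constructions then give the $\Psi$DO structure. The second step, which is the real computational core, is to compute the principal symbol $\sigma_0(K_0')(x,\xi)\in\C^{3\times 3}$ at a point $x\in\Gamma$ with $\xi\in T^*_x\Gamma\setminus\{0\}$. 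Freezing coefficients and passing to a tangent-plane local chart reduces this to a half-space boundary integral computation, from which the symbol can be read off as a rational function of $\xi$ and $\nu(x)$. A direct calculation—essentially diagonalizing a $3\times 3$ matrix built from the projections onto $\xi/|\xi|$, the orthogonal tangent direction, and the normal $\nu(x)$—shows that the eigenvalues of $\sigma_0(K_0')(x,\xi)$ are exactly $0$, $+C_{\lambda,\mu}$, and $-C_{\lambda,\mu}$, independently of $(x,\xi)$.

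By the Cayley–Hamilton identity applied to this $3\times 3$ principal symbol, $P_3(\sigma_0(K_0'))\equiv 0$, so $P_3(K_0')$ has vanishing order-zero principal symbol and is therefore a $\Psi$DO of order $-1$. Since $\Gamma$ is smooth and compact, the map $H^{-1/2}(\Gamma)^3\to H^{1/2}(\Gamma)^3$ factors through a compact Sobolev embedding, which yields the compactness of $P_3(K_0')$ on $H^{-1/2}(\Gamma)^3$. For the spectral statement, polynomial compactness of $K_0'$ with annihilating polynomial $P_3$ forces the essential spectrum to lie in $\{0,C_{\lambda,\mu},-C_{\lambda,\mu}\}$; away from these three points, $K_0'-\lambda I$ is Fredholm of index zero, so the remainder of the spectrum consists of eigenvalues of finite multiplicity that can accumulate only at the roots of $P_3$.

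The main obstacle will be showing that each of the three candidate accumulation points actually carries an infinite sequence of eigenvalues, since polynomial compactness alone only provides an upper bound on the essential spectrum. I would handle this by computing the spectrum of $K_0'$ explicitly on the unit sphere (where $K_0'$ is diagonalized by vector spherical harmonics and the eigenvalues can be written in closed form, revealing three families clustering at $0$ and $\pm C_{\lambda,\mu}$), and then transferring the non-emptiness to an arbitrary smooth closed $\Gamma$ by a perturbation argument: along a smooth deformation from $\Gamma$ to a sphere, the $\Psi$DO family $K_0'(t)$ has principal symbol constant in $t$ up to conjugation, so the Fredholm index of $K_0'(t)-\lambda$ is locally constant off the essential spectrum and the eigenvalue branches cannot all migrate to a single cluster point. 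The technical burden in this last step—controlling how eigenvalues move under deformation without colliding with the essential spectrum in uncontrolled ways—is the most delicate piece of the argument.
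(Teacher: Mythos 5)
First, note that the paper does not actually prove this statement: it is quoted directly from Ando, Kang and Miyanishi \cite[Theorems 2.1, 2.2]{AKM}, so there is no in-paper proof to compare against. Your outline for the compactness half is essentially the strategy of that reference: realize $K_0'$ as a zero-order pseudodifferential operator (in \cite{AKM} this is organized around surface Riesz transforms rather than a general parametrix construction), check that the principal symbol is a $3\times 3$ matrix whose eigenvalues are $0,\pm C_{\lambda,\mu}$ independently of $(x,\xi)$, and apply Cayley--Hamilton to conclude that $P_3(K_0')$ has order $-1$ and is therefore compact on $H^{-1/2}(\Gamma)^3$ by the Rellich embedding. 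That part of your plan is sound, modulo actually carrying out the symbol computation.

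The spectral half has two genuine gaps. First, you never invoke the Plemelj symmetrization principle $S_0K_0'=K_0S_0$, which makes $K_0'$ self-adjoint with respect to the inner product induced by $-S_0$ on $H^{-1/2}(\Gamma)^3$. Without it, polynomial compactness plus the analytic Fredholm theorem only tells you that the spectrum off $\{0,\pm C_{\lambda,\mu}\}$ is a discrete set of finite-multiplicity eigenvalues somewhere in $\C$; it does not give you real eigenvalues organized into three real sequences converging to the three real roots. Second, and more seriously, your mechanism for non-emptiness fails: the Fredholm index of $K_0'(t)-\lambda$ is identically zero on the entire complement of the three roots (it vanishes for $|\lambda|$ large and that complement is connected), so index constancy along a deformation from $\Gamma$ to the sphere carries no information about where eigenvalues go --- discrete eigenvalues can be absorbed into the essential spectrum at one cluster point and re-emerge at another, and nothing in your argument prevents all of them from accumulating at a single root. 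The way this is handled in \cite{AKM} is to show directly that each of $0$ and $\pm C_{\lambda,\mu}$ belongs to the essential spectrum on every smooth closed $\Gamma$, by constructing singular Weyl sequences (quasi-modes localized in phase space near a point $(x,\xi)$ at which the principal symbol attains the corresponding eigenvalue); combined with self-adjointness in the symmetrized inner product, this forces each of the three points to be a limit of eigenvalues. A sphere computation can serve as a sanity check, but it cannot be transferred to general $\Gamma$ by the index argument you describe.
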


\begin{figure}[htb]
\centering
\begin{tabular}{cc}
\includegraphics[scale=0.25]{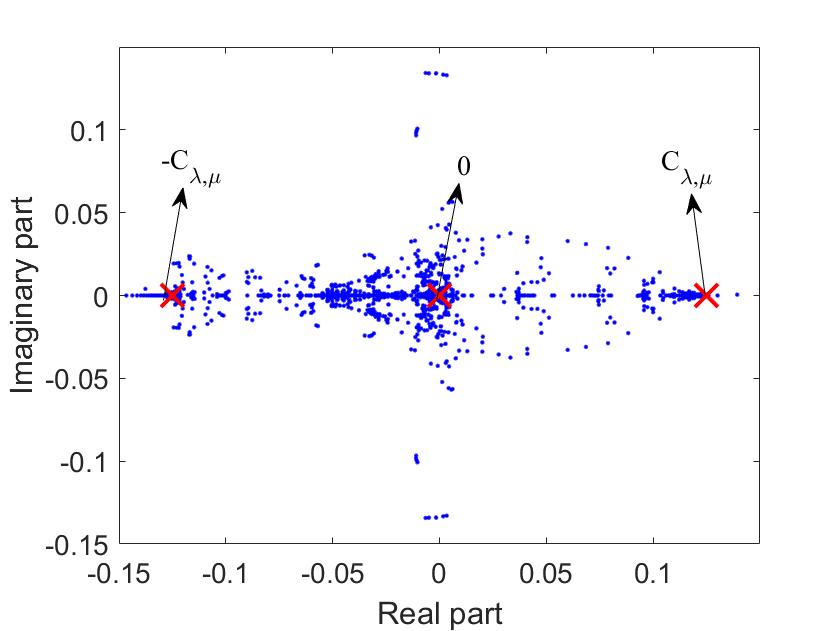} &
\includegraphics[scale=0.25]{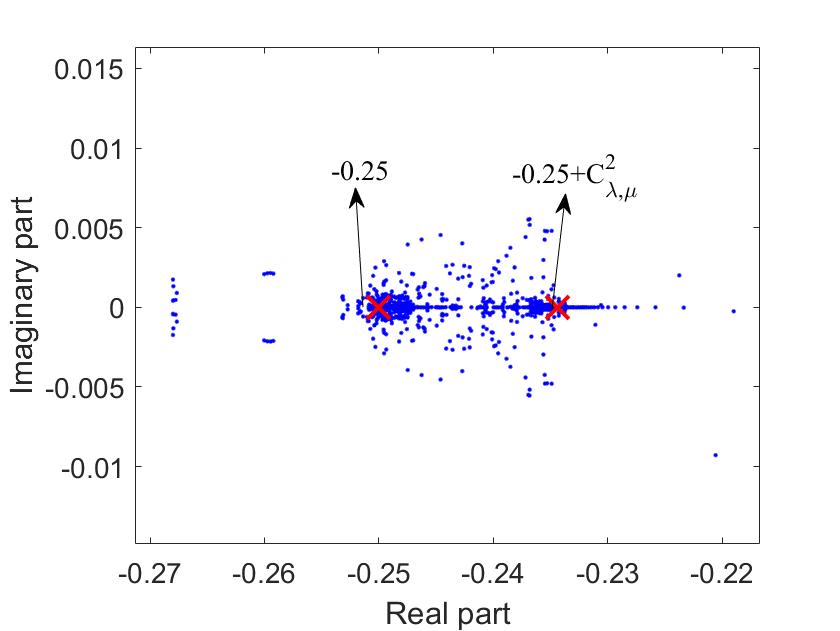} \\
(a) $K'$ & (b) $N S$ \\
\end{tabular}
\caption{Eigenvalue distributions for the integral operators $K'$ and
  $NS$. }
\label{eig1}
\end{figure}

Using this result we can explicitly obtain the accumulation points of
the eigenvalues of $K'$. Indeed, since $K'-K_0'$ has a weakly-singular
kernel it follows that $K'-K_0'$ is a compact operator, and we obtain
\ben P_3(K')=P_3(K_0')+K_c,\quad
K_c=K'(K'-K_0')(K'+K_0')+(K'-K_0')(K_0'^2-C_{\lambda,\mu}^2I), \enn
where $K_c$ is a compact operator. Therefore, the spectrum of $K'$
also consists of three sequences of eigenvalues which converge to 0,
$C_{\lambda,\mu}$ and $-C_{\lambda,\mu}$, respectively. In view of the
Calder\'on relation~\cite{HW08} \be
\label{caldron}
N S=-\frac{I}{4}+K'^2, \en together with the inequalities
$0 < C_{\lambda,\mu}<3/8$ (which result easily from the condition
$\lambda+2/3\mu>0$) we conclude that the eigenvalues of the composite
operator $N S$, which plays an essential role in the regularized
integral equations proposed in the following section, are bounded away
from zero and infinity.

To visualize the significance of these results we consider the
integral operators $K'$ and $N S$ associated with the problem of scattering by a unit
ball, and we choose $\lambda=2$, $\mu=1$, $\rho=1$,
$\omega=\pi$, from which we obtain $C_{\lambda,\mu}=0.125$. Letting
$N_\mathrm{DOF}$ denote the number of degrees of freedom used for
operator discretization, the eigenvalue distributions for the various
operators, which were obtained numerically as the eigenvalues of the
$N_\mathrm{DOF}\times N_\mathrm{DOF}$ matrices that result as each
operator discretized on the basis of the method introduced in Section
\ref{sec:4} with $6$ patches, $N=16$, $N^\beta=100$ and $p=8$, are
displayed in Figure~\ref{eig1}. (Matrices for the various
operators were obtained by applying the discretized operators
described in Section \ref{sec:4} to the canonical basis of
$\mathbb{C}^{N_\mathrm{DOF}}$, and, for simplicity, the eigenvalues of
the resulting matrices were obtained by means of Matlab's function
{\it eig}.)

\begin{figure}[htb]
\centering
\includegraphics[scale=0.35]{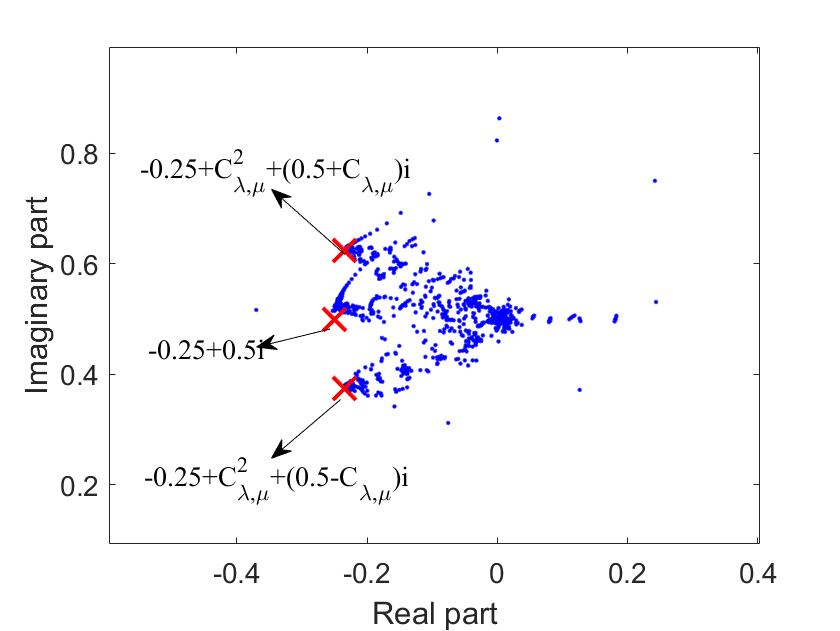}
\caption{Eigenvalue distribution for the integral operator $i\left(\frac{I}{2}-K'\right)+N\mathcal{R}$ in (\ref{BIER1}).}
\label{eig2}
\end{figure}

\subsection{Regularized boundary integral equation I: closed-surface case}
\label{sec:3.2}

Relying on the studies presented in Section \ref{sec:3.1} of the
spectra of various relevant elastic-scattering integral operators,
this section proposes regularized combined field equations that make
use of the single-layer operator $\mathcal{R}:=S_{i\omega_1}$
($\omega_1>0$) for the ``imaginary-frequency'' $i\omega_1$---in
addition to the aforementioned double-layer and hypersingular
operators $K'$ and $N$. For simplicity, we assume
$\omega_1=\omega$. Thus, replacing the scattered field
representation~\eqref{sol1} by the expression \be
\label{sol1r}
u(x)=(\mathcal{D}\mathcal{R}-i\eta\mathcal{S})(\psi)(x), \quad
x\in D, \en  we obtain  the regularized
integral equation \be
\label{BIER1}
\left[i\eta
  \left(\frac{I}{2}-K'\right)+N\mathcal{R}\right](\psi)= F
\quad\mbox{on}\quad \Gamma, \en instead of the classical combined
field equation (\ref{BIE1}). The favorable properties of
equation~\eqref{BIER1} are described in the following theorem.

\begin{theorem}
\label{theorem2}
The regularized integral equation (\ref{BIER1}) is uniquely solvable.
The spectrum of the regularized combined field integral operator on
the left hand side of that equation consists of three non-empty
sequences of eigenvalues which converge to
$-1/4+C_{\lambda,\mu}^2+i\eta(1/2+C_{\lambda,\mu})$, $-1/4+i\eta/2$
and $-1/4+C_{\lambda,\mu}^2+i\eta(1/2-C_{\lambda,\mu})$, respectively.
\end{theorem}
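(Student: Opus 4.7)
The proof has two parts, one for each assertion of the theorem. The first part reduces the spectral question for $A:=i\eta(I/2-K')+N\mathcal{R}$ to that of a polynomial in $K'$ modulo compact terms and invokes the spectral mapping theorem together with Theorem~\ref{theorem1}. The second part establishes invertibility by combining the Fredholm-of-index-zero conclusion of the first part with a Betti-energy injectivity argument that exploits the positivity of the imaginary-frequency single-layer operator $\mathcal{R}$.

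\textbf{Spectral description.} Write $\mathcal{R}=S+(\mathcal{R}-S)$. The leading $1/|x-y|$ singularity of the elastic fundamental solution $E$ is independent of frequency (it equals the static Kelvin--Somigliana kernel), so $\mathcal{R}-S=S_{i\omega_1}-S_\omega$ is smoothing and $N(\mathcal{R}-S)$ is compact on the relevant Sobolev scale. The Calder\'on relation~(\ref{caldron}) then yields
$$
A = P(K') + T_c,\qquad P(t) := t^2 - i\eta\,t + \tfrac{i\eta}{2} - \tfrac{1}{4},
$$
with $T_c$ compact. Since $K'-K_0'$ is compact, the argument immediately following Theorem~\ref{theorem1} shows that the essential spectrum of $K'$ equals $\{0,\pm C_{\lambda,\mu}\}$, with three non-empty sequences of eigenvalues accumulating at these points. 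Applying the polynomial spectral mapping theorem and the compact-perturbation invariance of the essential spectrum, $\sigma_{\mathrm{ess}}(A)=\{P(0),\,P(C_{\lambda,\mu}),\,P(-C_{\lambda,\mu})\}$, and a direct substitution produces exactly the three limits listed in the statement.

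\textbf{Unique solvability.} Each of the three values $P(0)$, $P(\pm C_{\lambda,\mu})$ has nonzero imaginary part (since $0<C_{\lambda,\mu}<3/8$), so $0\notin\sigma_{\mathrm{ess}}(A)$ and $A$ is Fredholm of index zero; it therefore suffices to prove $\mathrm{Ker}(A)=\{0\}$. Let $A\psi=0$ and set $u:=(\mathcal{D}\mathcal{R}-i\eta\,\mathcal{S})\psi$ in $D$. The standard jump relations give $Tu|_+=A\psi=0$, so uniqueness for the exterior Neumann problem~(\ref{Navier})--(\ref{RadiationCond}) yields $u\equiv 0$ in $D$; the interior Cauchy data are then $u|_-=-\mathcal{R}\psi$ and $Tu|_-=-i\eta\psi$. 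Applying Betti's first identity in $\Omega$ to $u$ and $\overline u$, whose bulk contribution is real, the imaginary part of the boundary integral vanishes and one obtains $\eta\,\mathrm{Re}\!\int_\Gamma\overline{\mathcal{R}\psi}\cdot\psi\,ds = 0$. A second application of Betti to $v:=\mathcal{S}_{i\omega_1}\psi$ (exponentially decaying at infinity, continuous across $\Gamma$, with traction jump $Tv|_--Tv|_+=\psi$), carried out separately in $\Omega$ and $D$ and summed, identifies the pairing with a nonnegative real quantity,
$$
\int_\Gamma\overline{\mathcal{R}\psi}\cdot\psi\,ds \;=\; \int_{\R^3\setminus\Gamma}\bigl(\mathcal{E}(v,\overline v)+\rho\omega_1^2|v|^2\bigr)\,dx,
$$
vanishing only when $v\equiv 0$ and hence (via the jump) only when $\psi=0$. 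The two identities together force $\psi=0$.

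\textbf{Main obstacle.} The most delicate technical point is the compactness of $N(\mathcal{R}-S)$: it rests on the cancellation of the leading singularities of $E$ at frequencies $\omega$ and $i\omega_1$, which must provide enough smoothing to absorb the hyper-singular operator $N$ on the Sobolev space in which the equation is posed. The remaining ingredients --- the Calder\'on relation, polynomial spectral mapping, and a Betti-energy argument using the positivity of $\mathcal{R}$ --- are essentially standard once this compactness is in hand.
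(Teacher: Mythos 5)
Your proposal is correct and takes essentially the same route as the paper's proof: the paper likewise writes $N\mathcal{R}=NS+N(S_{i\omega}-S)$ with the second term compact, invokes the Calder\'on relation (\ref{caldron}) and the three accumulation points of the spectrum of $K'$ from Theorem~\ref{theorem1} to locate the eigenvalue limits, and proves injectivity by exterior uniqueness, the jump relations and a Betti energy identity combined with the positive definiteness of $S_{i\omega}$. The only substantive differences are matters of detail: you derive the positivity of the pairing $\int_\Gamma \overline{\mathcal{R}\psi}\cdot\psi\,ds$ from a second Betti identity over $\R^3\setminus\Gamma$ where the paper simply cites \cite[Lemma 6.2]{AK02}, you make the Fredholm-of-index-zero step explicit where the paper leaves it implicit, and the sign you obtain for $Tu|_-$ (namely $-i\eta\psi$ rather than the paper's $i\eta\psi$) reflects a jump-relation sign convention and does not affect the conclusion.
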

\begin{proof}
  We need to show that the homogeneous equation of (\ref{BIER1}) only
  admits the trivial solution. Let us call $u^+$ (resp. $u^-$) the
  potential defined for $x\in D$ (resp. $x\in\Omega$) by the right
  hand side of Equation (\ref{sol1r}). Clearly, $u^+$ is a radiative
  solution to the elastic problem in $D$ with $T(\pa,\nu)u^+=0$ on
  $\Gamma$. We conclude that $u^+=0$ everywhere outside $\Omega$. From
  the classical jump relations for the boundary values of layer
  potentials, we see that \ben u^-=-S_{i\omega}(\psi),\quad
  T(\pa,\nu)u^-=i\eta\psi.  \enn Applying Betti's formula~\cite{BHSY},
  we then obtain  \ben i\eta\int_\Gamma
  S_{i\omega}(\psi)\ov{\psi}ds=\int_\Omega
  \left(\frac{\mu}{2}\left|\nabla u^-+\nabla^\top u^-\right|^2+
    \lambda|\nabla\cdot u^-|^2- \rho \omega^2|u^-|^2 \right)\,dx, \enn
  and therefore, \ben \int_\Gamma S_{i\omega}(\psi)\ov{\psi}ds=0.
  \enn It is known that $S_{i\omega}$ is positive definite~\cite[Lemma
  6.2]{AK02}, that is, there exists some positive constant $c>0$ such
  that \ben \int_\Gamma S_{i\omega}(\psi)\ov{\psi}ds\ge
  c\|\psi\|_{H^{-1/2}(\Gamma)^3}^2.  \enn This implies that $\psi=0$
  on $\Gamma$.

  Noting that \ben i\eta
  \left(\frac{I}{2}-K'\right)+N\mathcal{R}=i\eta
  \left(\frac{I}{2}-K'\right)+N S+N(S_{i\omega}-S), \enn and since
  $(S_{i\omega}-S):H^{-1/2}(\Gamma)^3\rightarrow H^{1/2}(\Gamma)^3$ is
  a compact operator (in view of its kernel's smoothness), the claims
  concerning accumulation points of eigenvalue sequences of the
  combined integral operator~(\ref{BIER1}) follows from the results
  presented in Section~\ref{sec:3.1}. The proof is now complete.
\end{proof}

To illustrate Theorem~\ref{theorem2} we utilize once again the
unit-ball scattering problem considered in Section~\ref{sec:3.1}. The
spectrum of the corresponding regularized combined field operator is
displayed in Figure~\ref{eig2}. Clearly the eigenvalues accumulate as
prescribed by the theorem, and, in particular, they do not accumulate
either at zero or infinity.

\subsection{Regularized boundary integral equation II: open-surface case}
\label{sec:3.3}

In our treatment of an open surface $\Gamma$ we assume, for
simplicity, that the surface $\Gamma$, its edge, and the right hand
side in equation~(\ref{BIE2}) are infinitely smooth. Under such
assumptions, the singular character of the solution $\varphi$ is given
by~\cite{CDD03} \ben \varphi= \psi\,d^{1/2}, \enn where $\psi$ is an
infinitely differentiable function in a neighborhood of the edge, up
to and including the edge, and where $d$ denotes the distance to the
edge. In view of this result we introduce a weight function $w(x)$
which is smooth, positive and non-vanishing across the interior of the
surface, and which, up to a factor that is $C^\infty$ throughout
$\Gamma$ (including the edge) has square-root asymptotic edge behavior
\ben w\sim d^{1/2}\quad\mbox{around the edge of }\Gamma.  \enn Then we
define the weighted operator \be
\label{weightN}
N_w(\psi)=N(w\psi), \en so that for functions $F$ that are smooth on
$\Gamma$, up to and including the edge, the solution of the equation
\be
\label{weightBIE2}
N_w(\psi)=F \quad\mbox{on}\quad\Gamma,
\en
is also smooth throughout the surface. In view of the spectral properties of the closed-surface composite operator $N S$, we consider the composite operator $N_w S_w$ and the corresponding
equation
\be
\label{BIER2}
N_w S_w(\psi)=F \quad\mbox{on}\quad\Gamma.  \en Here $S_w$ is a
weighed version of the operator $S$, \ben S_w(\psi)=S(\psi/w), \enn
(which can also be used for treatment of the scattering problem under
Dirichlet boundary conditions~\cite{BL12,BL13,BXY191,LB15}; see also
Remark~\ref{RDirichlet} and Figures~\ref{FigExp3.12}
and~\ref{FigExp3.3} in Section~\ref{sec:5}).

As shown in~\cite{BL12,LB15}, the equation analogous to~(\ref{BIER2})
for the 2D acoustic open-arc case is a second-kind equation. Further,
the numerical results presented in \cite{BL12,BL13} for 2D/3D acoustic
problems and in \cite{BXY191} for 2D elastic problems show that, in
the cases considered in those contributions, equation (\ref{BIER2})
requires significantly smaller numbers of GMRES iterations than
equation (\ref{weightBIE2}) for convergence to a given residual
tolerance. For our numerical study of the spectrum of the operator
$N_wS_w$ in the present elastic case we consider operators associated
with the problem of elastic scattering by a unit disc (using the same
parameters in Section \ref{sec:3.1}). Figure~\ref{eig3} displays
numerical values of the eigenvalues of the operator $N_wS_w$, which
were obtained by applying the discretization method introduced in
Section \ref{sec:4} with 5 patches. This figure clearly suggests that
the eigenvalues of $N_wS_w$ are at least bounded away from infinity,
although they also appear to approach the origin. As demonstrated in
Figure~\ref{FigExp3.11}, reduction in iteration numbers are observed
for three-dimensional open-surface elastic problems that are analogous
to those obtained for the corresponding closed-surface elastic case
(Figures~\ref{FigExp2.1} through~\ref{FigExp2.3}).

\begin{figure}[htb]
\centering
\includegraphics[scale=0.35]{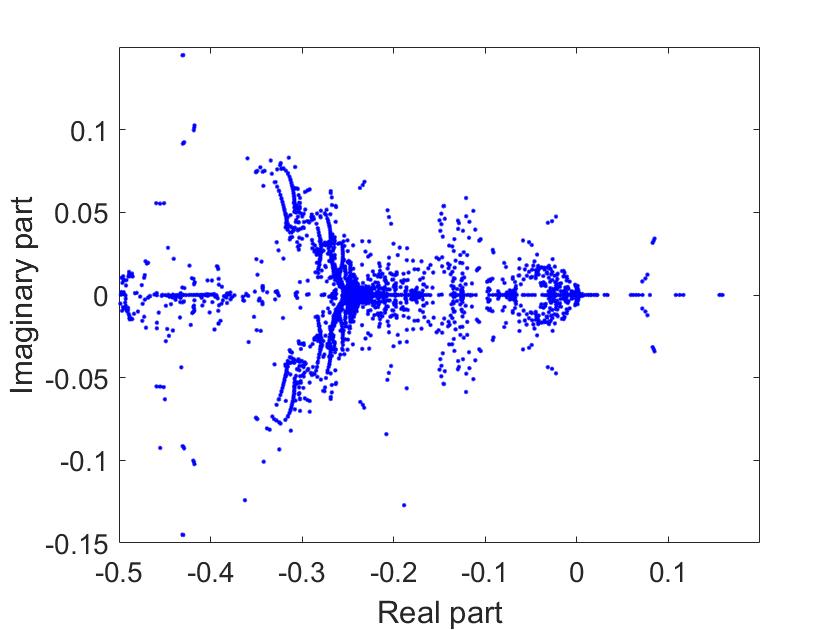}
\caption{Eigenvalue distribution for the integral operator $N_wS_w$ in
  (\ref{BIER2}).}
\label{eig3}
\end{figure}

\begin{remark}
  \label{RDirichlet}
  The regularization techniques introduced for the Neumann problem can
  also be applied for the problems of scattering under Dirichlet
  boundary conditions \ben u=G \quad\mathrm{on}\quad \Gamma.  \enn For
  the Dirichlet problem the solution can be expressed as a
  single-layer potential \ben u(x)=\mathcal{S}(\varphi)(x),\quad x\in
  D, \enn which results in the boundary integral equation \ben
  S(\varphi)=G \quad\mathrm{on}\quad \Gamma.  \enn The singular
  character of the solution $\varphi$, which is given by
  $\varphi\sim \psi\, d^{-1/2} $~\cite{CDD03} where $\psi$ is an
  infinitely differentiable function throughout $\Gamma$, up to and
  including the edge, leads us to consider the weighted integral
  equation \be
\label{BIED}
S_w(\psi)=G \quad\mathrm{on}\quad \Gamma.  \en As in the Neumann case,
further, we can also consider the combined operator $N_wS_w$ and the
corresponding equation \be
\label{BIEDR}
N_wS_w(\psi)=N_w(G) \quad\mathrm{on}\quad \Gamma, \en for the
Dirichlet problem---although, as demonstrated in
Figure~\ref{FigExp3.12}, the single layer formulation~\eqref{BIED}
already requires small iteration numbers, and no improvements in
iteration numbers result in this case from use of the $N_wS_w$
formulation.
\end{remark}

\subsection{Strong-singularity and hyper-singularity regularization}
\label{sec:3.4}

As noted in Sections~\ref{sec:1},~\ref{sec:2.2} and~\ref{sec:2.3}, the
integral operators $K'$, $N$ and $N_w$ are strongly singular and
hyper-singular, respectively. This section expresses the strongly
singular and hyper-singular boundary integral operators (\ref{BIER1})
and (\ref{BIER2}) in terms of compositions of operators of
differentiation in directions tangential to $\Gamma$ and
weakly-singular integral operators~\cite{BXY19,YHX}. Using this
reformulation together with efficient numerical implementations of
weakly-singular and tangential differentiation operators and the
linear algebra solver GMRES then leads to the proposed elastic-wave
solvers.

The traction operator can be expressed in the form \be \label{Tform2}
T(\pa,\nu)u(x)= (\lambda+\mu)\nu(\nabla \cdot u) + \mu\pa_{\nu}u + \mu
M(\pa,\nu)u \en where the operator $M(\pa,\nu)$, whose elements are
also called G\"unter derivatives, is defined by \ben M(\pa,\nu)u(x)=
\pa_{\nu}u -\nu(\nabla\cdot u)+\nu\times \,\curl\,u.  \enn Letting
$M(\pa_x,\nu_x)=[m_x^{ij}]_{i,j=1}^3$, it is easy to check that \ben
m_x^{ij}=\pa_{x_i}\nu_x^j-\pa_{x_j}\nu_x^i=-m_x^{ji}
\quad\mbox{for}\quad i,j=1,2,3, \enn and \ben
\label{M3}
M(\pa,\nu) =\begin{pmatrix}
0 & -\widetilde{\pa}_3 & \widetilde{\pa}_2 \\
\widetilde{\pa}_3 & 0 & -\widetilde{\pa}_1 \\
-\widetilde{\pa}_2 & \widetilde{\pa}_1 & 0
\end{pmatrix}, \enn where $\widetilde{\pa}_i,i=1,2,3$ are the
components of $\nu\times\nabla$, i.e.,
$\nu\times\nabla=(\widetilde{\pa}_1, \widetilde{\pa}_2,
\widetilde{\pa}_3)^\top$. Let $\nabla^S$ denote the surface gradient:
\ben \nabla^Su=\nabla u-\nu\pa_\nu u.  \enn Then we have
$\nu\times\nabla=\nu\times\nabla^S$. Writing
$\nu\times\nabla^S=(\widetilde{\pa}_1^S, \widetilde{\pa}_2^S,
\widetilde{\pa}_3^S)^\top$, we obtain \ben M(\pa,\nu) =\begin{pmatrix}
  0 & -\widetilde{\pa}_3^S & \widetilde{\pa}_2^S \\
  \widetilde{\pa}_3^S & 0 & -\widetilde{\pa}_1^S \\
  -\widetilde{\pa}_2^S & \widetilde{\pa}_1^S & 0
\end{pmatrix}.
\enn

The following lemma can be established as in~\cite{BXY19}, and we omit
the proof here.
\begin{lemma}
\label{RegClose}
The boundary integral operator $K'$ can be expressed in the form
\be
\label{operatorK}
K' &=& K_1+M(\pa,\nu)K_2,
\en
where
\ben
K_1(\varphi)(x) &=& \int_\Gamma \left\{\pa_{\nu_x}\gamma _{k_s}(x,y)I- \nu_x \nabla _x^\top [\gamma_{k_s}(x,y)-\gamma_{k_p}(x,y)] \right\}\varphi(y)ds_y\quad\mbox{and}\\
K_2(\varphi)(x) &=& \int_\Gamma \left[2\mu E(x,y) - \gamma
_{k_s}(x,y)I\right]\varphi(y)ds_y.
\enn
For the hyper-singular operator $N$, in turn, we have
\be
\label{operatorN}
N=N_1+M(\pa,\nu)N_2M(\pa,\nu)+\mathcal{T}_2N_3\mathcal{T}_1+M(\pa,\nu)N_4+N_5M(\pa,\nu),
\en
where
\ben
N_1(\varphi)(x)&=& -\rho\omega^2\int_\Gamma\left[ \gamma_{k_s}(x,y)(\nu_x\nu_y^\top-\nu_x^\top\nu_yI)- \gamma_{k_p}(x,y)\nu_x\nu_y^\top\right]\varphi(y)ds_y,\nonumber\\
N_2(\varphi)(x) &=& \int_\Gamma\left[4\mu^2 E(x,y) -3\mu\gamma_{k_s}(x,y)I\right]\varphi(y)ds_y,\nonumber\\
N_3(\varphi)(x) &=& \mu\int_\Gamma \gamma_{k_s}(x,y)\varphi(y)ds_y,\nonumber\\
N_4(\varphi)(x) &=& \int_\Gamma \left\{\mu\pa_{\nu_y}\gamma_{k_s}(x,y)I- 2\mu\nabla_y[\gamma_{k_s}(x,y)-\gamma_{k_p}(x,y)]\nu_y^\top \right\}\varphi(y)ds_y,\nonumber\\
N_5(\varphi)(x) &=& \int_\Gamma \left\{\mu\pa_{\nu_x}\gamma_{k_s}(x,y)I- 2\mu\nu_x\nabla_x^\top [\gamma_{k_s}(x,y)-\gamma_{k_p}(x,y)] \right\} \varphi(y)ds_y,
\enn
and where, for a scalar field $v$ and a vector field $V$, the operators $\mathcal{T}_1$ and $\mathcal{T}_2$ in (\ref{operatorN}) are defined by
\ben
\mathcal{T}_1v=\nu\times\nabla^Sv, \quad\mathcal{T}_2V=(\nu\times\nabla^S)\cdot V.
\enn
The kernels of the integral operators $K_i, i=1,2$ in
(\ref{operatorK}) and $N_j, j=1,\cdots,5$ in (\ref{operatorN}) are all
at-most weakly-singular.
\end{lemma}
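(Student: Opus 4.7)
The plan is to establish both identities by kernel-level manipulation, relying on three ingredients already at hand: the three-term decomposition (\ref{Tform2}) of the traction operator, the scalar Helmholtz identity $\Delta_x\gamma_{k_t}(x,y)=-k_t^2\gamma_{k_t}(x,y)$, and the fact that the G\"unter operator $M(\pa,\nu)$ involves only tangential derivatives and hence commutes with integration in $y$ over $\Gamma$, so that it can be pulled outside the integral in $x$; on the $y$-side the analogous operator is moved onto the density by a surface integration by parts.

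For the $K'$ identity I would substitute the decomposition (\ref{Tform2}) into (\ref{BIOK}) applied to the kernel $E(x,y)$. The summand $\mu M(\pa_x,\nu_x)E(x,y)$ is immediately pulled out of the $y$-integral, producing a weakly-singular integrand of the form $2\mu E(x,y)-\gamma_{k_s}(x,y) I$; the subtracted $\gamma_{k_s}$ is what is needed so that the strongly singular contributions of the gradient-of-divergence part of $E$ cancel against the strongly singular portions of the remaining $\mu\pa_{\nu_x}$ and $(\lambda+\mu)\nu_x\nabla_x^\top$ pieces, after one invokes $\Delta_x\gamma_{k_t}=-k_t^2\gamma_{k_t}$ to eliminate the second $x$-derivative that appears in $E$. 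The surviving kernel is $\pa_{\nu_x}\gamma_{k_s}(x,y) I-\nu_x\nabla_x^\top[\gamma_{k_s}-\gamma_{k_p}]$, each summand being weakly singular: the first thanks to the smooth-surface estimate $(x-y)\cdot\nu_x=O(|x-y|^2)$, the second because the leading $1/|x-y|$ contributions of $\gamma_{k_s}$ and $\gamma_{k_p}$ cancel in the difference. The resulting expression is exactly $K_1+M(\pa,\nu)K_2$.

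For the $N$ identity I would iterate the same strategy, but now applied to the double traction $T(\pa_x,\nu_x)[T(\pa_y,\nu_y)E(x,y)]^\top$. Substituting (\ref{Tform2}) in both variables produces nine pieces which I would group by the number of $M$-factors on each side. The two-sided-$M$ group yields $M(\pa,\nu)N_2 M(\pa,\nu)$ after pulling $M(\pa_x,\nu_x)$ out on the $x$-side and transferring $M(\pa_y,\nu_y)$ onto the density on the $y$-side; the one-sided-$M$ groups similarly produce $M(\pa,\nu)N_4$ and $N_5 M(\pa,\nu)$. The zero-$M$ cross-terms involving $(\lambda+\mu)\nu\nabla\cdot$ and $\mu\pa_{\nu}$ combine and, upon substituting $\Delta_x\gamma_{k_t}=-k_t^2\gamma_{k_t}$ to convert the resulting $\Delta$-acting-on-$E$ contributions, collapse to the $N_1$ block (which is only weakly singular because the $\rho\omega^2$ factor is accompanied by a scalar Helmholtz kernel and a tangential structure $\nu_x\nu_y^\top-\nu_x^\top\nu_y I$, $\nu_x\nu_y^\top$). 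The leftover top-order piece, in which tangential derivatives from both sides hit $\gamma_{k_s}$ inside the second-derivative part of $E$, is re-expressed via the vector identity for $\curl\,\curl$ together with the symmetry $\nabla_x\gamma_{k_s}=-\nabla_y\gamma_{k_s}$ and a surface integration by parts, which moves one tangential derivative onto $\varphi$ via $\mathcal{T}_1=\nu\times\nabla^S$ and pulls the other outside the integral via $\mathcal{T}_2=(\nu\times\nabla^S)\cdot$; this produces the $\mathcal{T}_2 N_3\mathcal{T}_1$ contribution. A case-by-case inspection of each $N_j$ — relying on cancellation of leading $O(|x-y|^{-3})$ and $O(|x-y|^{-2})$ contributions through $\Delta\gamma_{k_t}=-k_t^2\gamma_{k_t}$, through the smoothness of $\gamma_{k_s}-\gamma_{k_p}$, and through $(x-y)\cdot\nu=O(|x-y|^2)$ — confirms that all five $N_j$ kernels are at worst $O(|x-y|^{-1})$.

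The main obstacle is the last step for $N$: organizing the kernel-level bookkeeping so that the hyper-singular $O(|x-y|^{-3})$ behaviour of $T_xT_y^\top E$ is entirely absorbed either by the outer tangential operators $M(\pa,\nu),\mathcal{T}_1,\mathcal{T}_2$ or by the Laplacian identity for $\gamma_{k_t}$, leaving only at-most-weakly-singular remainders. The calculation is essentially the elastic three-dimensional analogue of the Maue-type formula, and once the algebraic identities above are lined up the remaining verifications are routine; this is why we would, as in~\cite{BXY19,YHX}, omit the detailed computation.
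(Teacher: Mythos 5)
The paper offers no proof of this lemma, stating only that it ``can be established as in'' \cite{BXY19}, and your outline reproduces precisely the G\"unter-derivative (Maue-type) derivation used in that reference and in \cite{YHX}: substitute the decomposition (\ref{Tform2}) of the traction operator in each variable, use $\Delta_x\gamma_{k_t}=-k_t^2\gamma_{k_t}$ to absorb the second derivatives arising from $E$, and transfer the tangential operators $M(\pa,\nu)$, $\mathcal{T}_1$, $\mathcal{T}_2$ onto the density or outside the integral by surface integration by parts. Your sketch is therefore correct in approach and at least as detailed as the paper itself; the only loose point is the claim that the kernel $2\mu E-\gamma_{k_s}I$ of $K_2$ arises from ``immediately pulling out'' the $\mu M(\pa_x,\nu_x)E$ summand of (\ref{Tform2})---that summand alone contributes only $\mu E$, and the remaining $\mu E-\gamma_{k_s}I$ inside $M(\pa,\nu)K_2$ must be extracted from the $(\lambda+\mu)\nu\,\nabla\cdot$ and $\mu\pa_\nu$ pieces, as your subsequent cancellation argument implicitly requires.
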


Noting that $w^2(x)$ is a smooth function of $x$ throughout $\Gamma$
which vanishes at the edge of $\Gamma$, the following open-surface
version of the previous lemma can similarly be established.
\begin{lemma}
\label{RegOpen}
The hyper-singular operator $N_w$ can be expressed in the form \be
\label{operatorNw}
N_w=N_1^w+M(\pa,\nu)N_2^w\mathcal{T}^w+\mathcal{T}_2N_3^w\mathcal{T}_1^w
+M(\pa,\nu)N_4^w+N_5^w\mathcal{T}^w, \en where \ben
N_i^w(\varphi)&=& N_i(w\varphi),\quad i=1,4,\\
N_j^w(\varphi)&=& N_j(\varphi/w),\quad j=2,3,5, \enn and where, for a
scalar field $v$ and a vector field $V$, the operators
$\mathcal{T}_1^w$ and $\mathcal{T}^w$ are given by \ben
\mathcal{T}_1^wv=w^2\nu\times\nabla^Sv+
\frac{v}{2}\nu\times\nabla^S(w^2), \enn \ben
\mathcal{T}^wV=w^2M(\pa,\nu)V+M(\pa,\nu)(w^2)\frac{V}{2}, \enn
respectively. The kernels of the integral operators
$N_j^w, j=1,\cdots,5$ in (\ref{operatorNw}) are all at-most
weakly-singular.
\end{lemma}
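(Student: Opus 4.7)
The plan is to substitute $\varphi = w\psi$ into the decomposition of $N$ provided by Lemma~\ref{RegClose}, and then to re-express the two ``inner'' factors $M(\pa,\nu)(w\psi)$ and $\mathcal{T}_1(w\psi)$ that appear in terms of the weighted tangential operators $\mathcal{T}^w$ and $\mathcal{T}_1^w$. For the ``outer'' occurrences---those inside the $N_1$ and $N_4$ pieces---no rewriting will be needed, since the definition $N_i^w(\psi) = N_i(w\psi)$ for $i=1,4$ directly produces the terms $N_1^w(\psi)$ and $M(\pa,\nu)N_4^w(\psi)$.

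The central computation will be the product rule for the first-order tangential differential operators $\widetilde{\pa}_i^S$. Applying it entry-wise to the matrix $M(\pa,\nu)$ yields
\[
M(\pa,\nu)(w\psi) = w\,M(\pa,\nu)\psi + M(\pa,\nu)(w)\,\psi,
\]
while the chain rule gives $M(\pa,\nu)(w^2) = 2w\,M(\pa,\nu)(w)$. Combining these two observations one verifies
\[
\tfrac{1}{w}\,\mathcal{T}^w(\psi)
= w\,M(\pa,\nu)\psi + \tfrac{1}{2w}\,M(\pa,\nu)(w^2)\,\psi
= M(\pa,\nu)(w\psi),
\]
and an entirely analogous manipulation with $\nu\times\nabla^S$ in place of $M(\pa,\nu)$ (using $\nabla^S(w^2) = 2w\,\nabla^S w$) produces the companion identity $\tfrac{1}{w}\mathcal{T}_1^w(\psi) = \mathcal{T}_1(w\psi)$.

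With these two identities in hand, I substitute them into $N_w(\psi)=N(w\psi)$ via Lemma~\ref{RegClose} and regroup each resulting inner factor $N_j(\cdot/w)$ as $N_j^w(\cdot)$ for $j=2,3,5$; this yields exactly the claimed decomposition of $N_w$. The weak-singularity statement on the kernels of $N_1^w,\dots,N_5^w$ is then immediate, because each such kernel differs from the corresponding kernel in Lemma~\ref{RegClose} only by a factor of $w(y)$ (for $j=1,4$) or $1/w(y)$ (for $j=2,3,5$), neither of which affects the $|x-y|$-singularity structure on the interior of $\Gamma$.

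The point that must be handled with care---and which I regard as the main obstacle---is the motivation for the particular $\tfrac{1}{2}$-weighted form of $\mathcal{T}^w$ and $\mathcal{T}_1^w$. These operators are crafted precisely so that the formal factor $1/w$ appearing in the two identities above cancels against the $1/w$ hidden in the definitions of $N_j^w$ for $j=2,3,5$, with the consequence that no genuine division by $w$ is ever carried out in the final decomposition. In particular, each of $\mathcal{T}^w$ and $\mathcal{T}_1^w$ remains a bona fide first-order tangential differential operator whose coefficients are smooth up to and including the edge of $\Gamma$ (since $w^2$ is smooth there), and this is exactly what is required both for the new decomposition to make sense on an open surface and for the at-most weakly-singular character of the constituent integral operators to be preserved.
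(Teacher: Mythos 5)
Your derivation is correct and is essentially the argument the paper intends: the paper omits the proof, stating only that the lemma ``can similarly be established'' from Lemma~\ref{RegClose} using the smoothness of $w^2$, and your substitution $\varphi = w\psi$ together with the Leibniz-rule identities $\tfrac{1}{w}\mathcal{T}^w(\psi)=M(\pa,\nu)(w\psi)$ and $\tfrac{1}{w}\mathcal{T}_1^w(\psi)=\mathcal{T}_1(w\psi)$ is exactly that route. Your closing remark---that the $\tfrac12$-weighted forms are designed so that only derivatives of the smooth function $w^2$ (never of $w$ itself) appear in the final operators---captures the one point the paper explicitly flags as the reason the closed-surface lemma carries over.
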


\section{Numerical implementation}
\label{sec:4}

In view of the integral-operator formulations presented in
Section \ref{sec:3.4}, a numerical version of the regularized integral
operators introduced in Sections~\ref{sec:2.2} and~\ref{sec:3.3} can
be obtained as a sum of (possibly multiple) compositions of numerical
operators of two types, namely, (i)~Integral operators of the forms
\be
\label{weakly} \mathcal{H}\varphi(x)&=&\int_\Gamma
  H(x,y)\varphi(y)ds_y,\\
\label{weakly1}
\widehat{\mathcal{H}}^1(x)&=&\int_\Gamma H(x,y)\varphi(y)w(y)ds_y,\\
\label{weakly2}
\widehat{\mathcal{H}}^2(x)&=&\int_\Gamma H(x,y)\varphi(y)/w(y)ds_y,
\en in which the kernel $H(x,y)$ is weakly singular, and
(ii)~Differentiation operators for the evaluation of the surface
gradient of a given smooth function defined on $\Gamma$. Here, the
integral (\ref{weakly}) is related to closed-surface problems and the
integrals (\ref{weakly1}) and (\ref{weakly2}) are related to
open-surface problems. This section presents algorithms for numerical
evaluation of operators of these types, including a
rectangular-polar~\cite{BG18} Chebyshev-based quadrature method for
weakly singular operators $\mathcal{H}$, $\widehat{\mathcal{H}}^1$ and
$\widehat{\mathcal{H}}^2$ as well as Chebyshev-based differentiation
algorithms. In all, the regularized iterative open- and closed-surface
solvers rely on
\begin{itemize}
\item[(1).] A partition of the scattering surface $\Gamma$ into a set of
  non-overlapping logically-quadrilateral parametrized patches;
\item[(2).] High-order integration rules based on Chebyshev polynomials,
  Fejer's first quadrature rule, and ``rectangular-polar'' changes of
  variables which produce accurate approximations of the integral
  operators with weakly-singular kernels;
\item[(3).] Chebyshev-based differentiation rules; and,
\item[(4).] The iterative linear algebra solver GMRES, for solution of
  the discrete versions of Eqs. (\ref{BIER1}) and (\ref{BIER2}).
\end{itemize}

The methods for evaluation of the weakly singular and differentiation
operators in closed-surface cases differ somewhat from their
open-surface counterparts. Accordingly, Sections~\ref{sec:4.1} and
\ref{sec:4.2} present algorithms for the tasks~(1) through~(3) above
in the closed- and open-surface cases,
respectively. Section~\ref{sec:4.3}, finally, presents overall
pseudo-codes for the complete scattering algorithms.

\subsection{Closed-surface case}
\label{sec:4.1}

\subsubsection{Surface partitioning and discretization}
\label{sec:4.1.1}
The proposed numerical method evaluates the necessary weakly-singular
operators $\mathcal{H}$ on the basis of the Chebyshev-based
rectangular-polar solver developed in \cite{BG18}. We thus assume the
scattering surface has been partitioned into a set of $M$
non-overlapping ``logically-quadrilateral'' parametrized patches
(i.e. patches that can be parametrized from the parameter square
$[-1,1]\times [-1,1]$), which can easily be obtained, for example,
from typical CAD (Computer Aided Design) models. Let, then, the
non-overlapping partition of the scattering surface be given by the
union of logically-rectangular patches $\Gamma_q$, \ben
\Gamma=\bigcup_{q=1}^M \Gamma_q,\quad
\Gamma_q:=\left\{x={\bf r}^q(u,v): [-1,1]^2\rightarrow \R^3\right\}.  \enn Then the integral $\mathcal{H}$ over $\Gamma$ can be
decomposed as a sum of integrals over each one of the patches: \ben
\mathcal{H}(x)=\sum_{q=1}^M \mathcal{H}_q(x),\quad
\mathcal{H}_q(x):=\int_{\Gamma_q} H(x,y)\varphi(y)ds_y, \quad
x\in\Gamma.  \enn

Once the patch structure has been established, a number of
``singular'', ``near-singular'' and ``regular'' integration problems
arise as described in Section~\ref{sec:4.2}, for which specialized
rules are used for accuracy and efficiency. In all cases the numerical
method we use incorporates Fej\'er's first quadrature rule, which
effectively exploits the discrete orthogonality property satisfied by
the Chebyshev polynomials in the Chebyshev meshes. Denoting by
$\tau_j\in [-1,1],j=0,\cdots,N-1$ the $N$ Chebyshev points \ben
\tau_j=\cos\left(\frac{2j+1}{2N}\pi\right),\quad j=0,\cdots,N-1, \enn Using the Cartesian-product discretization $\{u_i=\tau_i| i=1,\cdots,N\}\times \{v_j=\tau_j| j=1,\cdots,N\}$, we choose the discretization points in each patch $\Gamma_q$ according to
\ben x_{ij}^q={\bf r}^q(u_i,v_j),\quad i,j=0,\cdots,N-1.
\enn Then, a given density $\varphi$ with values
$\varphi_{ij}^q = \varphi(x_{ij}^q)$ is approximated by means of the
Chebyshev expansion \ben \varphi(x)\approx \sum_{i,j=0}^{N-1}
\varphi_{ij}^q a_{ij}(u,v),\quad x\in\Gamma_q, \enn where \ben
a_{ij}(u,v)= \frac{1}{N^2}\sum_{m,n=0}^{N-1}\alpha_n\alpha_m
T_n(u_i)T_m(v_j)T_n(u)T_m(v),\quad \alpha_n=\begin{cases} 1, & n=0,\cr
  2, & n\ne 0.
\end{cases}
\enn As is well known the functions $a_{ij}(u,v)$ satisfy the
relations \ben a_{ij}(u_n,v_m)=\begin{cases} 1, & (n,m)=(i,j),\cr 0, &
  \mathrm{otherwise}.
\end{cases}
\enn

\subsubsection{Non-adjacent and adjacent integration}
\label{sec:4.1.2}

The method we use for evaluation of an integral $\mathcal{H}_q$ of the
form~\eqref{weakly} at the discretization points
$x^{\widetilde{q}}_{ij}$ ($\widetilde{q}=1,\cdots,M$) proceeds by
consideration of the distance  \ben
\mbox{dist}_{x,\Gamma_q}:=\min_{(u,v)\in[-1,1]^2}
\left\{|x-{\bf r}^q(u,v)| \right\} \enn between the point
$x$ and the patch $\Gamma_q$. Denote the index sets
\be
\label{indexseta}
\mathrm{I_a}&:=& \{(q,\widetilde{q},i,j)| \mbox{dist}_{x^{\widetilde{q}}_{ij},\Gamma_q}\le \tau,\;q,\widetilde{q}=1,\cdots,M, \; i,j=1,\cdots,N\}, \\
\label{indexsetna}
\mathrm{I_{na}}&:=& \{(q,\widetilde{q},i,j)| \mbox{dist}_{x^{\widetilde{q}}_{ij},\Gamma_q}> \tau,\;q,\widetilde{q}=1,\cdots,M, \; i,j=1,\cdots,N\},
\en
where $\tau$ is some tolerance (in this paper, we use $\tau=0.1$).

In the "non-adjacent" integration case,
in which the point $x^{\widetilde{q}}_{ij}$ is far from the
integration patch (i.e., $(q,\widetilde{q},i,j)\in\mathrm{I_{na}}$), the integrand
$\mathcal{H}_q(x^{\widetilde{q}}_{ij})$ is smooth. Then this integral
can be accurately evaluated by means of
Fej\'er's first quadrature rule  \be
\label{nonadjencent}
\mathcal{H}_q(x^{\widetilde{q}}_{ij}) &=& \int_{\Gamma_q} H(x^{\widetilde{q}}_{ij},y)\varphi(y)ds_y \nonumber\\
&=& \int_{-1}^1\int_{-1}^1 H(x^{\widetilde{q}}_{ij},{\bf r}^q(u,v)) \varphi({\bf r}^q(u,v)) J^q(u,v) \,dudv \nonumber\\
&\approx& \sum_{m,n=0}^{N-1}
H(x^{\widetilde{q}}_{ij},{\bf r}^q(u_n,v_m)) \varphi_{nm}^q
J^q(u_n,v_m)w_nw_m, \en where $J^q(u,v)$ denotes the surface Jacobian
and $w_j,j=0,\cdots,N-1$ are the quadrature weights \ben
w_j=\frac{2}{N}\left(1-2\sum_{l=1}^{\lfloor N/2\rfloor}
  \frac{1}{4l^2-1}\cos(lu_j)\right),\quad j=0,\cdots,N-1.  \enn

In the "adjacent" integration case, in which the point
$x^{\widetilde{q}}_{ij}$ either lies within the integration patch or
is "close" to it (i.e., $(q,\widetilde{q},i,j)\in\mathrm{I_{a}}$), in turn, the
problem of evaluation of $\mathcal{H}_q(x^{\widetilde{q}}_{ij})$
presents a challenge in view of the singularity or nearly-singularity
of its kernel. To tackle this difficulty we apply a change of
variables whose derivatives vanish at the singularity or, for nearly
singular problems, at the point in the integration patch that is
closest to the singularity---in either case, the coordinates
$(\widetilde{u}^q,\widetilde{v}^q)\in [-1,1]$ of the point around
which refinements are performed are given by \ben
(\widetilde{u}^q,\widetilde{v}^q)={\arg\min}_{(u,v)\in[-1,1]^2}
\left\{|x^{\widetilde{q}}_{ij}-{\bf r}^q(u,v)| \right\}.
\enn

The quantities $\widetilde{u}^q,\widetilde{v}^q$ can be found by
means of an appropriate minimization algorithm such as the golden
section search algorithm. A  ``rectangular-polar''
change of variables can be constructed on the basis of the
one-dimensional change of variables \ben \xi_\alpha(t)=\begin{cases}
  \alpha+\frac{\mbox{sgn}(t)-\alpha}{\pi}w_p(\pi|t|), & \alpha\ne
  \pm1, \cr
  \alpha-\frac{1+\alpha}{\pi}w_p\left(\pi\frac{|t-1|}{2}\right), &
  \alpha=1, \cr
  \alpha+\frac{1-\alpha}{\pi}w_p\left(\pi\frac{|t+1|}{2}\right), &
  \alpha=-1.
\end{cases}
\enn Here $w_p(t)$ is a function depending on a constant $p\ge 2$
given by \ben
w_p(t)=2\pi\frac{[\eta_p(t)]^p}{[\eta_p(t)]^p+[\eta_p(2\pi-t)]^p},\quad
0\le t\le 2\pi, \enn where \ben
\eta_p(t)=\left(\frac{1}{p}-\frac{1}{2}\right)
\left(\frac{\pi-t}{\pi}\right)^3
+\frac{1}{p}\left(\frac{\pi-t}{\pi}\right)+ \frac{1}{2}.  \enn It is
easy to check that the derivatives of $w_p(t)$ up to order $p-1$
vanish at the endpoints. Applying the Chebyshev expansion of the
density $\varphi$, the above change of variables and the Fej\'er's
first quadrature rule, we obtain
 \be
\label{adjencent}
\mathcal{H}_q(x^{\widetilde{q}}_{ij}) &=& \int_{\Gamma_q} H(x^{\widetilde{q}}_{ij},y)\varphi(y)ds_y \nonumber\\
&\approx& \sum_{n,m=0}^{N-1}\varphi_{nm}^q \int_{-1}^1\int_{-1}^1 H(x^{\widetilde{q}}_{ij},{\bf r}^q(u,v)) J^q(u,v)a_{nm}(u,v) \,dudv \nonumber\\
&=& \sum_{n,m=0}^{N-1}\varphi_{nm}^q \int_{-1}^1\int_{-1}^1 \widetilde{H}(x^{\widetilde{q}}_{ij},s,t) \widetilde{J}^q(s,t) \widetilde{a}_{nm}(s,t)\xi_{\widetilde{u}^q}'(s) \xi_{\widetilde{v}^q}'(t) \,dsdt \nonumber\\
&\approx& \sum_{n,m=0}^{N-1}A_{ij,nm}^{\widetilde{q},q}\varphi_{nm}^q
\en
where
\be
\label{matrixA}
A_{ij,nm}^{\widetilde{q},q} &=&\sum_{l_1,l_2}^{N^\beta-1} \widetilde{H}(x^{\widetilde{q}}_{ij},\widetilde{u}_{l_1},\widetilde{u}_{l_2}) \widetilde{J}^q(\widetilde{u}_{l_1},\widetilde{u}_{l_2}) \widetilde{a}_{nm}(\widetilde{u}_{l_1},\widetilde{u}_{l_2}) \xi_{\widetilde{u}^q}'(\widetilde{u}_{l_1}) \xi_{\widetilde{v}^q}'(\widetilde{u}_{l_2}) \widetilde{w}_{l_1}\widetilde{w}_{l_2},
\en
with
\ben
\widetilde{H}(x^{\widetilde{q}}_{ij},s,t)&=& H(x^{\widetilde{q}}_{ij}, {\bf r}^q(\xi_{\widetilde{u}^q}(s),\xi_{\widetilde{v}^q}(t))), \\ \widetilde{J}^q(s,t)&=& J^q(\xi_{\widetilde{u}^q}(s),\xi_{\widetilde{v}^q}(t)), \\ \widetilde{a}_{nm}(s,t)&=& a_{nm}(\xi_{\widetilde{u}^q}(s),\xi_{\widetilde{v}^q}(t)),
\enn
and where  the quadrature nodes and weights are given by
\ben
\widetilde{u}_j=\cos\left(\frac{2j+1}{2N^\beta}\pi\right),\quad j=0,\cdots,N^\beta-1,
\enn
and
\ben
\widetilde{w}_j=\frac{2}{N^\beta}\left(1-2\sum_{l=1}^{\lfloor N^\beta/2\rfloor} \frac{1}{4l^2-1}\cos(l\widetilde{u}_j)\right),\quad j=0,\cdots,N^\beta-1.
\enn
Using sufficiently large numbers $N^\beta$ of discretization points along the $u$ and $v$ directions to accurately resolve the challenging integrands, all singular and nearly singular problems can be treated with high accuracy under discretizations that are not excessively fine.

\subsubsection{Evaluation of surface gradients}
\label{sec:4.1.3}

Now we describe the implementation we use for the evaluation of
the surface gradient $\nabla^S$, from which the needed
surface-differentiation operators
$M(\pa,\nu),\mathcal{T}_1,\mathcal{T}_2$
can be extracted. On each patch $\Gamma_q$, the surface gradient of a
given density $\psi(u,v)=\varphi({\bf r}^q(u,v))$ is given by \ben \nabla_x^S\psi &=&
g^{11}\frac{d\psi}{d u}\frac{d {\bf r}^q}{d u}
+g^{12}\frac{d\psi}{d u}\frac{d {\bf r}^q}{d v}
+g^{21}\frac{d\psi}{d v}\frac{d {\bf r}^q}{d u}
+g^{22}\frac{d\psi}{d v}\frac{d {\bf r}^q}{d v},
\enn where $g^{ij}, i,j=1,2$ denote the components of the inverse of the first
fundamental matrix $G=[g_{ij}]_{i,j=1}^2$ with \ben
&& g_{11}=\frac{d {\bf r}^q}{d u}\cdot \frac{d {\bf r}^q}{d u},\quad g_{12}=\frac{d {\bf r}^q}{d u}\cdot \frac{d {\bf r}^q}{d v}, \\
&& g_{21}=\frac{d {\bf r}^q}{d v}\cdot \frac{d
  {\bf r}^q}{d u},\quad g_{22}=\frac{d
  {\bf r}^q}{d v}\cdot \frac{d
  {\bf r}^q}{d v}.  \enn The quantities
$\frac{d\psi}{d u},\frac{d\psi}{d v}$ can be easily
evaluated by means of term-by-term differentiation of the Chebyshev
expansion of $\psi$. Therefore, we have
\ben
(\nabla_x^S\psi)\Big|_{x=x^{q}_{ij}} =\sum_{n,m=0}^{N-1} B_{ij,nm}^q\varphi_{nm}^q,
\enn
where
\be
\label{matrixB}
B_{ij,nm}^q=\left(g^{11}\frac{da_{nm}}{d u}\frac{d {\bf r}^q}{d u}
+g^{12}\frac{da_{nm}}{d u}\frac{d {\bf r}^q}{d v}
+g^{21}\frac{da_{nm}}{d v}\frac{d {\bf r}^q}{d u}
+g^{22}\frac{da_{nm}}{d v}\frac{d {\bf r}^q}{d v}\right)\Big|_{u=u_i,v=v_j}.
\en

\subsection{Open-surface case}
\label{sec:4.2}
\subsubsection{Surface partitioning, discretization and integration}
\label{sec:4.2.1}

As we did for closed surfaces, here we assume the open scattering
surface has been partitioned into a set of $M$ non-overlapping
logically-rectangular patches $\Gamma_q$, \ben \Gamma=\bigcup_{q=1}^M
\Gamma_q,\quad \Gamma_q:=\left\{x=\widehat{\bf r}^q(u,v):
  [-1,1]^2\rightarrow \R^3\right\}.  \enn Then the integrals
$\widehat{\mathcal{H}}^1$ and $\widehat{\mathcal{H}}^2$ can be
decomposed as sums of integrals over each one of the patches: \be
\label{weaklyw1}
\widehat{\mathcal{H}}^1(x)=\sum_{q=1}^M \widehat{\mathcal{H}}^1_q(x),\quad \widehat{\mathcal{H}}^1_q(x):=\int_{\Gamma_q} H(x,y)\varphi(y)w(y)ds_y, \quad x\in\Gamma,\\
\label{weaklyw2}
\widehat{\mathcal{H}}^2(x)=\sum_{q=1}^M
\widehat{\mathcal{H}}^2_q(x),\quad
\widehat{\mathcal{H}}^2_q(x):=\int_{\Gamma_q}
H(x,y)\varphi(y)/w(y)ds_y, \quad x\in\Gamma.  \en

In view of the weight function $w\sim d^{1/2}$ that is present in the
integrands of both~\eqref{weaklyw1} and~\eqref{weaklyw2}, a direct
application in the present context of the integration method proposed
in Section~\ref{sec:4.1.2} only yields accuracy of low order. To
demonstrate this fact we consider the integrals \ben I_1=\int_{-1}^1
\cos(t)\sqrt{1-t^2}\,dt,\quad I_2=\int_{-1}^1
\frac{\cos(t)}{\sqrt{1-t^2}}\,dt, \enn where the term
$w(t) = \sqrt{1-t^2}$ is the singular weight function in this case. As
demonstrated in Table \ref{compI1I2}, applications of the Fejer's
first quadrature rule to the integrals $I_1$ and $I_2$ only yield
third- and and first-order convergence, respectively.

\begin{table}[htb]
  \caption{Errors in the evaluation of the integrals $I_1$ and $I_2$ by means of Fejer's first quadrature rule.}
\centering
\begin{tabular}{|c|c|c|c|c|}
\hline
$N$ & $I_1$ & Order & $I_2$ & Order \\
\hline
5  & 9.06E-4 & --   & 3.27E-2 & -- \\
\hline
10 & 1.65E-4 & 2.46 & 1.55E-2 & 1.08 \\
\hline
15 & 4.87E-5 & 3.01 & 1.04E-2 & 0.98 \\
\hline
20 & 2.12E-5 & 2.89 & 7.75E-3 & 1.02 \\
\hline
30 & 6.30E-6 & 2.99 & 5.17E-3 & 1.00 \\
\hline
\end{tabular}
\label{compI1I2}
\end{table}

To evaluate of the integrals (\ref{weaklyw1}) and (\ref{weaklyw2})
with high accuracy order we introduce the  change of
variables \ben
u=\eta_u^q(s)=
\begin{cases}
s, & \mathrm{No\;edge\;on\;}u,\cr
\cos(\frac{\pi}{2}(1-s)), & \mathrm{Edges\;at\;}u=\pm 1,\cr
1-2\cos(\frac{\pi}{4}(1+s)), & \mathrm{Edge\;at\;}u=-1\mathrm{\;only},\cr
2\cos(\frac{\pi}{4}(1-s))-1, & \mathrm{Edge\;at\;}u=1\mathrm{\;only},
\end{cases}
\enn which maps the interval $[-1,1]$ to itself. Incorporating this
change of variables we obtain \be
\label{weaklyw11}
 \widehat{\mathcal{H}}^1_q(x)= \int_{-1}^1\int_{-1}^1 H(x,\widehat{\bf r}^q(\eta_u^q(s),\eta_v^q(t))) \varphi(\widehat{\bf r}^q(\eta_u^q(s),\eta_v^q(t))) \widehat{J}^q(\eta_u^q(s),\eta_v^q(t))\widetilde{w}_1(s,t) \,dsdt,
\en
and
\be
\label{weaklyw21}
\widehat{\mathcal{H}}^2_q(x)= \int_{-1}^1\int_{-1}^1 H(x,\widehat{\bf
  r}^q(\eta_u^q(s),\eta_v^q(t))) \varphi(\widehat{\bf
  r}^q(\eta_u^q(s),\eta_v^q(t)))
\widehat{J}^q(\eta_u^q(s),\eta_v^q(t))\widetilde{w}_2(s,t) \,dsdt, \en
where \ben
w_1(s,t) &=& \frac{d\eta_u^q(s)}{ds}\frac{d\eta_v^q(t)}{dt}w(\widehat{\bf r}^q(\eta_u^q(s),\eta_v^q(t))),\\
w_2(s,t) &=&
\frac{d\eta_u^q(s)}{ds}\frac{d\eta_v^q(t)}{dt}[w(\widehat{\bf
  r}^q(\eta_u^q(s),\eta_v^q(t)))]^{-1}, \enn and
$\widehat{J}^q(\eta_u^q(s),\eta_v^q(t))$ denotes the surface
Jacobian. It is easily checked that the integrands in~(\ref{weaklyw1})
and~(\ref{weaklyw2}) equal the products of the weakly singular kernel
$H(x,\widehat{\bf r}^q(\eta_u^q(s),\eta_v^q(t)))$ multiplied by a
smooth function.

Using the Cartesian-product discretization
$\{s_i=\tau_i| i=1,\cdots,N\}\times \{t_j=\tau_j| j=1,\cdots,N\}$, we
choose the discretization points in each patch $\Gamma_q$ according to
\ben \widehat{x}_{ij}^q=\widehat{\bf
  r}^q(\eta_u^q(s_i),\eta_v^q(t_j)),\quad i,j=0,\cdots,N-1.  \enn
Then, a given density $\varphi$ with values
$\widehat{\varphi}_{ij}^q = \varphi(\widehat{x}_{ij}^q)$ is
approximated by means of the Chebyshev expansion \ben
\varphi(x)\approx \sum_{i,j=0}^{N-1} \widehat{\varphi}_{ij}^q
a_{ij}(s,t),\quad x=\widehat{\bf
  r}^q(\eta_u^q(s),\eta_v^q(t))\in\Gamma_q, \enn and the non-adjacent
and adjacent evaluation of the integrals (\ref{weaklyw11}) and
(\ref{weaklyw21}) with respect to $(s,t)$ at the discretization points
$\widehat{x}_{ij}^{\widetilde{q}}, q=1,\cdots,M, i,j=1,\cdots,N$ is
then produced, with high-order accuracy, by means of the numerical
strategy presented in Section \ref{sec:4.1.2}.

\subsubsection{Evaluation of surface gradients}
\label{sec:4.2.2}

Finally, we describe the implementation we use for the evaluation of
the surface gradient $\nabla^S$ and associated operators $M(\pa,\nu)$,
$\mathcal{T}^w$ and $\mathcal{T}_1^w$ for open-surface
problems. Incorporating the open-surface change of variables
introduced in Section~\ref{sec:4.2.1}, the surface gradient of a given
density $\psi(s,t)=\varphi({\bf r}^q(\eta_u^q(s),\eta_v^q(t)))$ on
each patch $\Gamma_q$ is given by \ben \nabla_x^S\psi=
\widehat{g}^{11}\frac{d\psi}{d s}\frac{d {\bf r}^q}{d
  u}\frac{d\eta_u^q(s)}{ds} +\widehat{g}^{12}\frac{d\psi}{d s}\frac{d
  {\bf r}^q}{d v}\frac{d\eta_v^q(t)}{dt}
+\widehat{g}^{21}\frac{d\psi}{d t}\frac{d {\bf r}^q}{d
  u}\frac{d\eta_u^q(s)}{ds} +\widehat{g}^{22}\frac{d\psi}{d t}\frac{d
  {\bf r}^q}{d v}\frac{d\eta_v^q(t)}{dt}, \enn where
$\widehat{g}^{ij}, i,j=1,2$ denote the components of the inverse of
the first fundamental matrix
$\widehat{G}=[\widehat{g}_{ij}]_{i,j=1}^2$ with \ben
&& \widehat{g}_{11}=\frac{d\eta_u^q(s)}{ds}\frac{d\eta_u^q(s)}{ds} \left(\frac{d {\bf r}^q}{d u}\cdot \frac{d {\bf r}^q}{d u}\right),\quad \widehat{g}_{12}=\frac{d\eta_u^q(s)}{ds}\frac{d\eta_v^q(t)}{dt} \left(\frac{d {\bf r}^q}{d u}\cdot \frac{d {\bf r}^q}{d v}\right), \\
&& \widehat{g}_{21}=\frac{d\eta_u^q(s)}{ds}\frac{d\eta_v^q(t)}{dt}
\left(\frac{d {\bf r}^q}{d v}\cdot \frac{d {\bf r}^q}{d
    u}\right),\quad
\widehat{g}_{22}=\frac{d\eta_v^q(t)}{dt}\frac{d\eta_v^q(t)}{dt}
\left(\frac{d {\bf r}^q}{d v}\cdot \frac{d {\bf r}^q}{d v}\right).
\enn Analogously, the quantities $\frac{d\psi}{ds},\frac{d\psi}{dt}$
can be easily evaluated by means of term-by-term differentiation of
the Chebyshev expansion of $\psi$. Therefore, we have \ben
(\nabla_x^S\psi)\Big|_{x=\widehat{x}^{q}_{ij}} =\sum_{n,m=0}^{N-1}
\widehat{B}_{ij,nm}^q\varphi_{nm}^q, \enn where \ben
\widehat{B}_{ij,nm}^q&=&\Big(\widehat{g}^{11}\frac{da_{nm}}{d
  s}\frac{d {\bf r}^q}{d u}\frac{d\eta_u^q(s)}{ds}
+\widehat{g}^{12}\frac{da_{nm}}{d s}\frac{d {\bf r}^q}{d v}\frac{d\eta_v^q(t)}{dt}\\
&\quad&+\widehat{g}^{21}\frac{da_{nm}}{d t}\frac{d {\bf r}^q}{d
  u}\frac{d\eta_u^q(s)}{ds} +\widehat{g}^{22}\frac{da_{nm}}{d
  t}\frac{d {\bf r}^q}{d
  v}\frac{d\eta_v^q(t)}{dt}\Big)\Big|_{s=s_i,t=t_j}.  \enn

\subsection{Overall algorithm pseudocode}
\label{sec:4.3}

Utilizing the concepts presented in Section~\ref{sec:4.1}, the
proposed algorithm for solution of problems of elastic scattering by
closed surfaces is summarized in the following pseudocode. Relying on
Section~\ref{sec:4.2} instead of~\ref{sec:4.1}, the corresponding
pseudocode for open-surface problems is completely analogous, and is
therefore omitted.
\begin{itemize}
\item[I.] Initialization. Input values of $M,N,N^\beta,\tau,p$ and
  construct the surface partitioning and discretization points
  $x_{ij}^q, q=1,\cdots,M, i,j=1,\cdots,N$;
\item[II.] Precomputation.
\begin{itemize}
\item[i.] For all $\widetilde{q},q=1,\cdots,M, i,j=1,\cdots,N$, compute the index sets $\mathrm{I_a}$ and $\mathrm{I_{na}}$, see (\ref{indexseta}) and (\ref{indexsetna});
\item[ii.] Compute the matrices $A_{ij,nm}^{\widetilde{q},q}$, $(q,\widetilde{q},i,j)\in\mathrm{I_{a}}, n,m=1,\cdots,N$ given in (\ref{matrixA}) for adjacent integration;
\item[iii.] Compute the matrices $B_{ij,nm}^{q}$,
  $q=1,\cdots,M, i,j,n,m=1,\cdots,N$ given in (\ref{matrixB}) for
  evaluation of surface gradients;
\end{itemize}
\item[III.] Iterative solution. Use the iterative solver GMRES to
  approximate the solution of the discrete form of the linear equation
  (\ref{BIE1}) or (\ref{BIER1}). The necessary matrix-vector products
  are obtained by suitable compositions and combinations, as detailed
  in Sections~\ref{sec:4.1.2} and~\ref{sec:4.1.3}, of the matrices
  obtained per point~II above.
\end{itemize}

\begin{figure}[htbp]
\centering
\begin{tabular}{ccc}
\includegraphics[height=4.5cm,width=4.5cm]{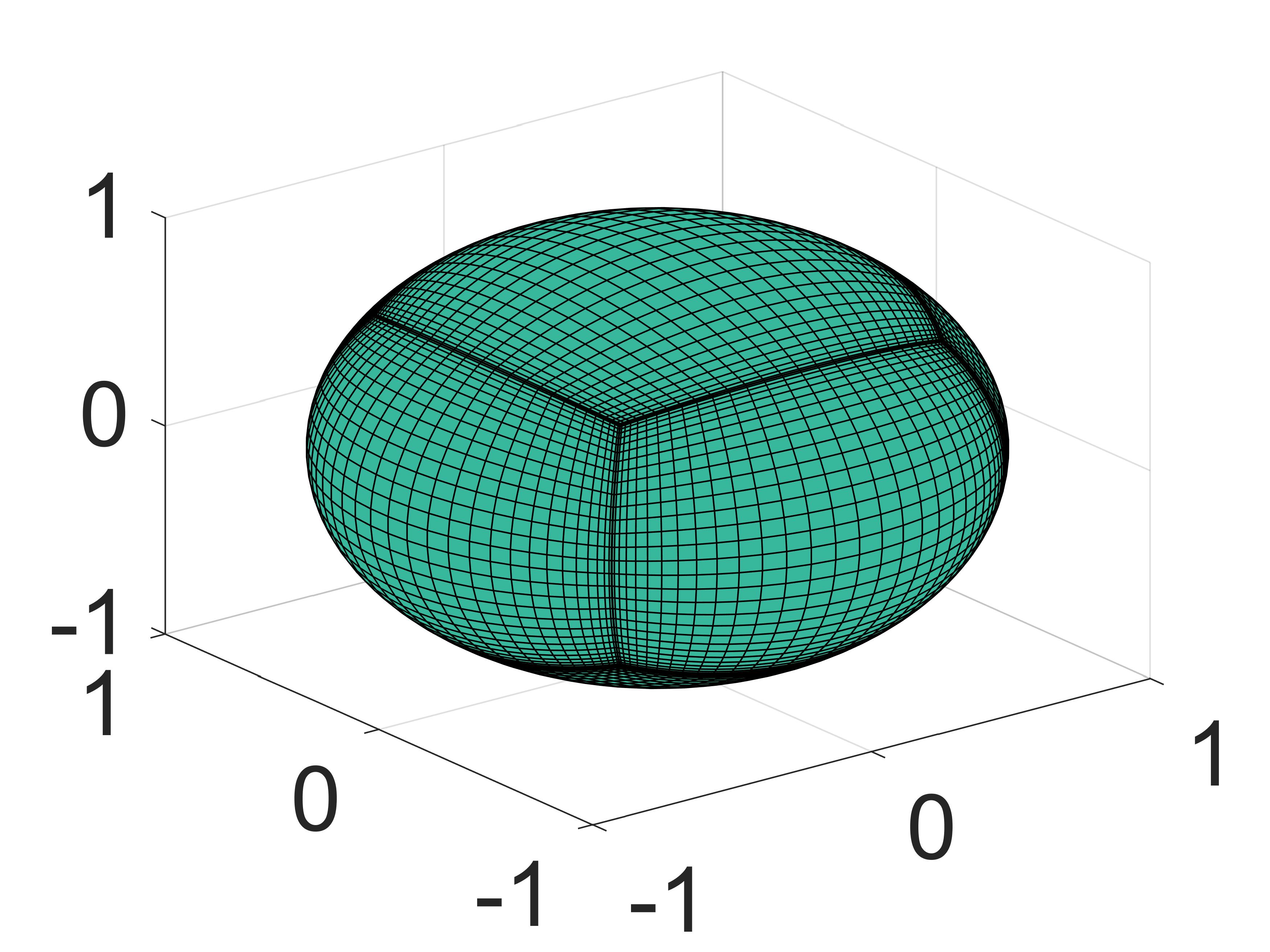} &
\includegraphics[height=4.5cm,width=4.5cm]{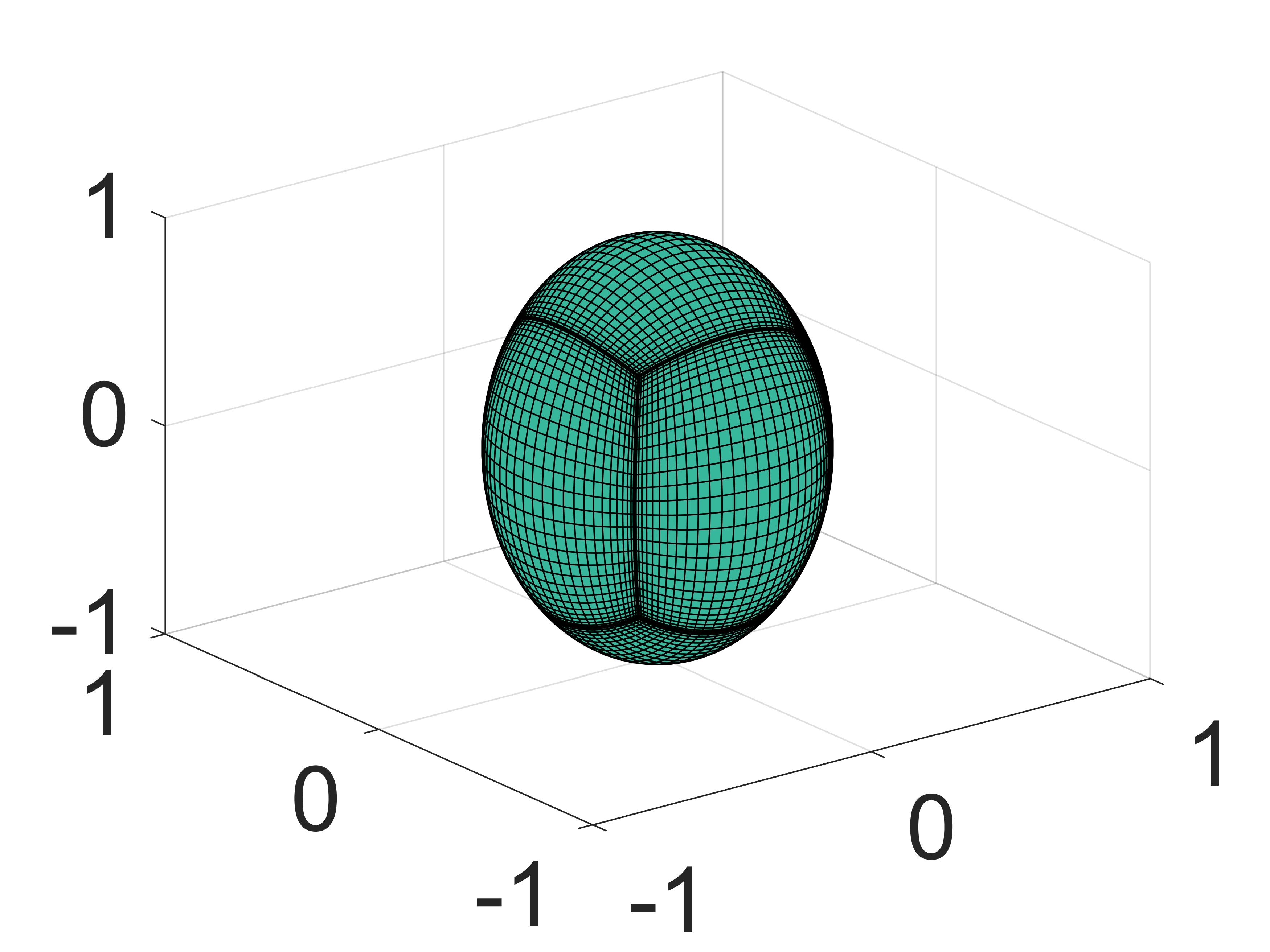} &
\includegraphics[height=4.5cm,width=4.5cm]{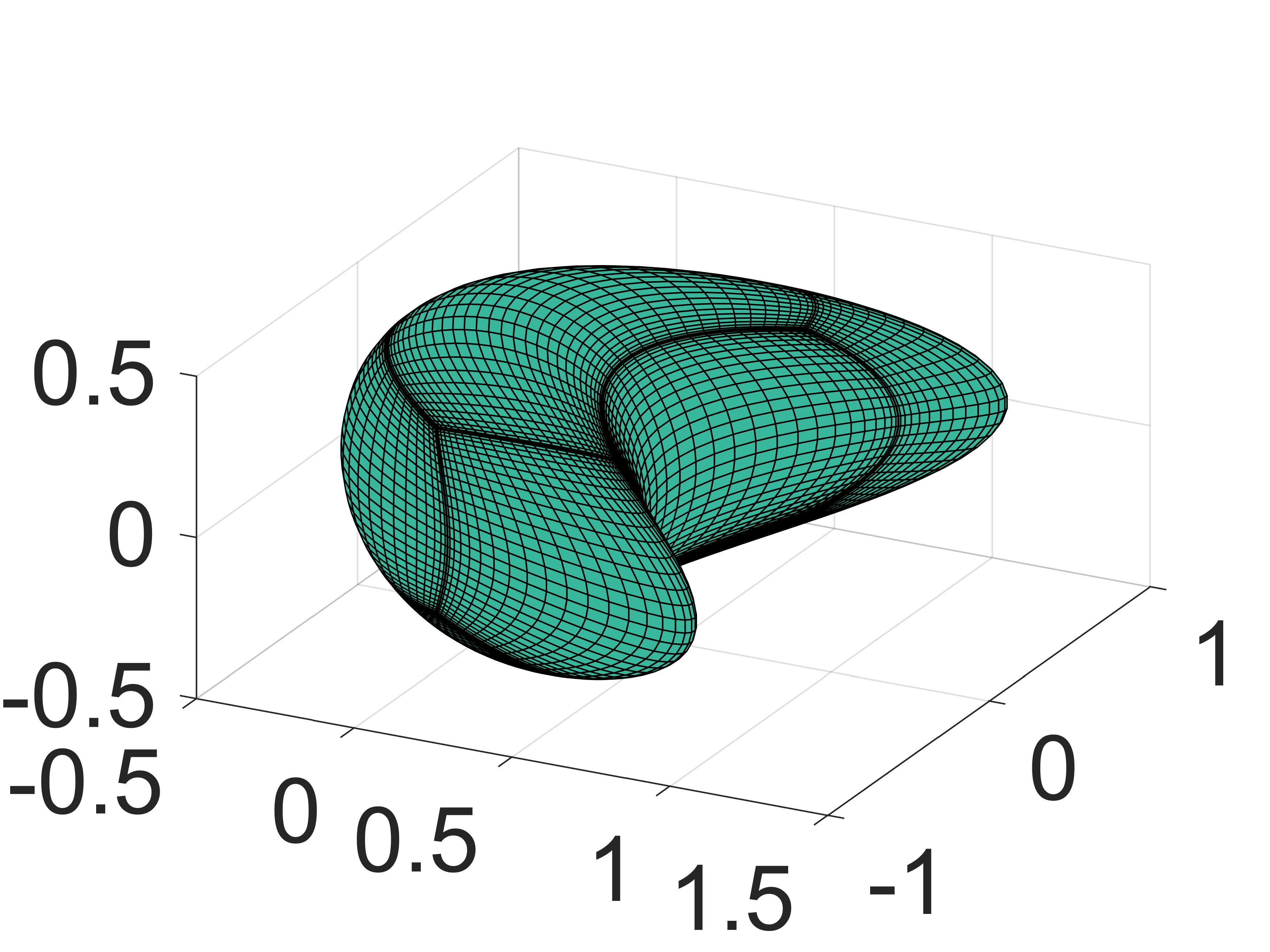} \\
(a) Ball & (b) Ellipsoid & (c) Bean \\
\end{tabular}
\caption{Obstacles used in the numerical tests presented in
  Section~\ref{sec:5}.}
\label{obstacle}
\end{figure}

\section{Numerical experiments}
\label{sec:5}
This section presents a variety of numerical tests that demonstrate
the accuracy and efficiency of the proposed three-dimensional elastic
scattering solver---or, more precisely, the accuracy and efficiency of
the computational implementations presented in Section~\ref{sec:4} for
the regularized integral equations (\ref{BIER1}) and (\ref{BIER2}) and
associated field evaluation expressions. For definiteness, the Lam\'e
constants and densities for the elastic medium are assumed as follows:
$\lambda=2$, $\mu=1$, $\rho=1$. Solutions for the integral equations
were produced by means of the fully complex version of the iterative
solver GMRES with residual tolerance $\epsilon_r$ as specified in each
case. The maximum errors presented in this section are calculated in
accordance with the expression \ben \epsilon_\infty:= \frac{\max_{x\in
    S}\{|u^\mathrm{num}(x)-u^\mathrm{ref}(x)|\}}{\max_{x\in
    S}\{|u^\mathrm{ref}(x)|\}}, \enn where $S$ is the square
$[-1,1]\times [-1,1]\times\{2\}\subset D$, and where $u^\mathrm{ref}$
is produced, for each example, through evaluation of exact solutions
$u^\mathrm{ex}$ when available, or by means of numerical solution with
sufficiently fine discretizations, otherwise. All of the numerical
tests were obtained by means of Fortran numerical implementations,
parallelized using OpenMP, on a single node (twenty-four computing
cores) of a dual socket Dell R420 with two Intel Xenon E5-2670 v3 2.3
GHz, 128GB of RAM.

In our first experiment we evaluate the accuracy of the discretization
methods used for the operators $\mathcal{T}_1$ and $\mathcal{T}_2$ on
a sphere partitioned as indicated in Figure \ref{obstacle}(a), and
using the scalar and vector functions \ben
v(x)=\sin(x_1)e^{i(x_2+x_3)},\quad
V(x)=(\sin(x_1),\cos(x_2),e^{ix_3}).  \enn As indicated in Section
\ref{sec:4.3}, the functions $\mathcal{T}_1v$ and $\mathcal{T}_2V$ are
evaluated in our context via term-by-term differentiation of the
Chebyshev expansions of $v$ and $V$. The resulting differentiation
errors, evaluated as a maximum over all discretization points, are
presented in Table \ref{compSG}---which, in particular, displays the
expected exponential convergence.

\begin{table}[htb]
  \caption{Errors in the evaluation of the operators $\mathcal{T}_1$ and $\mathcal{T}_2$.}
\centering
\begin{tabular}{|c|c|c|}
\hline
$N$ & $\mathcal{T}_1v$ & $\mathcal{T}_2V$ \\
\hline
5 & 1.03E-1 & 1.83E-2 \\
\hline
10 & 3.22E-4 & 6.36E-5 \\
\hline
15 & 1.29E-6 & 7.41E-8 \\
\hline
20 & 1.82E-9 & 9.96E-11 \\
\hline
25 & 3.16E-12 & 2.11E-12 \\
\hline
\end{tabular}
\label{compSG}
\end{table}

\begin{figure}[htbp]
\centering
\begin{tabular}{cc}
\includegraphics[scale=0.05]{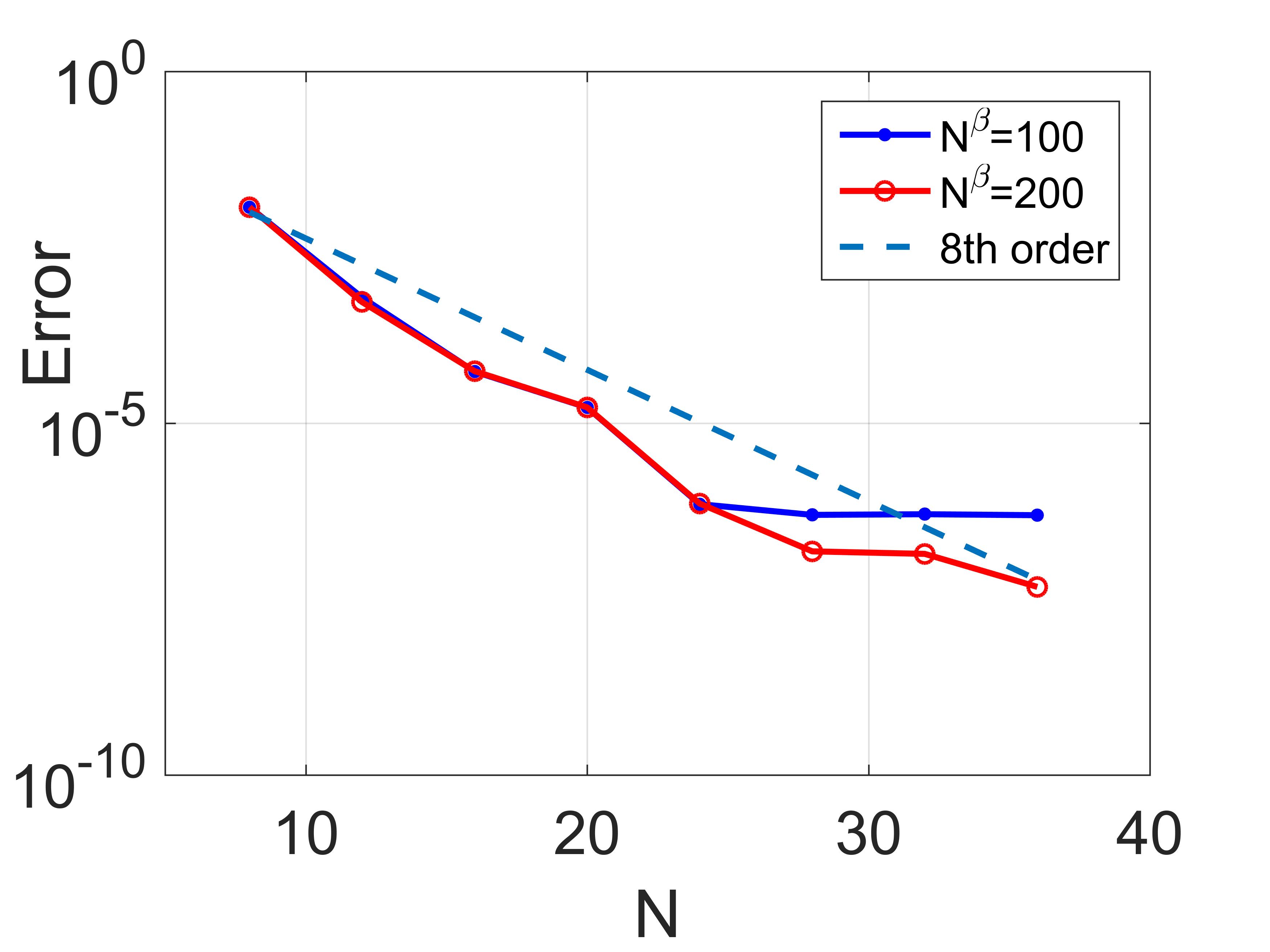} &
\includegraphics[scale=0.05]{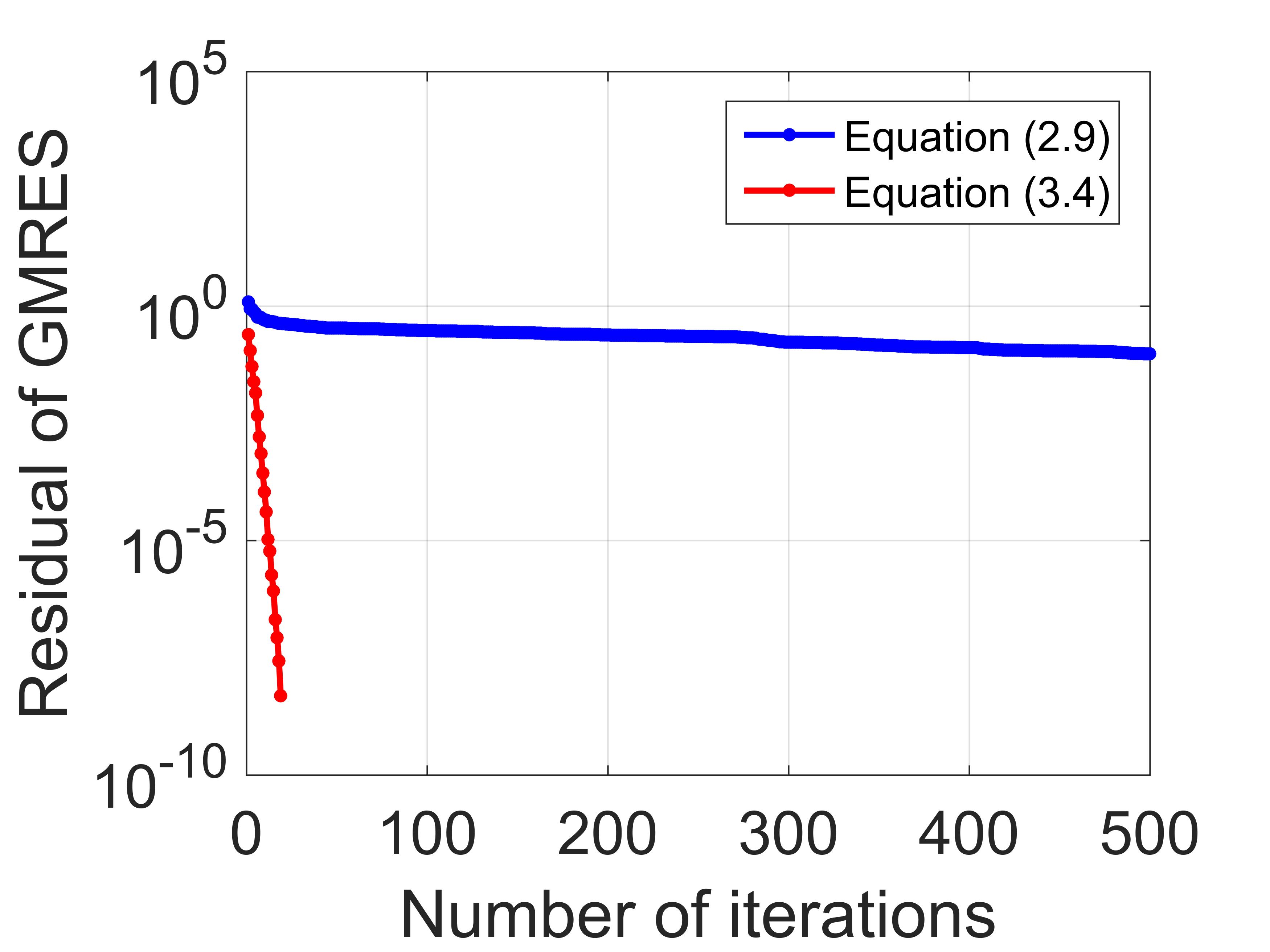} \\
(a) $\epsilon_\infty$ & (b) $\epsilon_r$\\
\end{tabular}
\caption{Numerical errors (a) and GMRES residual (b) for the problem
  of scattering by the spherical obstacle.}
\label{FigExp2.1}
\end{figure}

\begin{figure}[htbp]
\centering
\begin{tabular}{cc}
\includegraphics[scale=0.05]{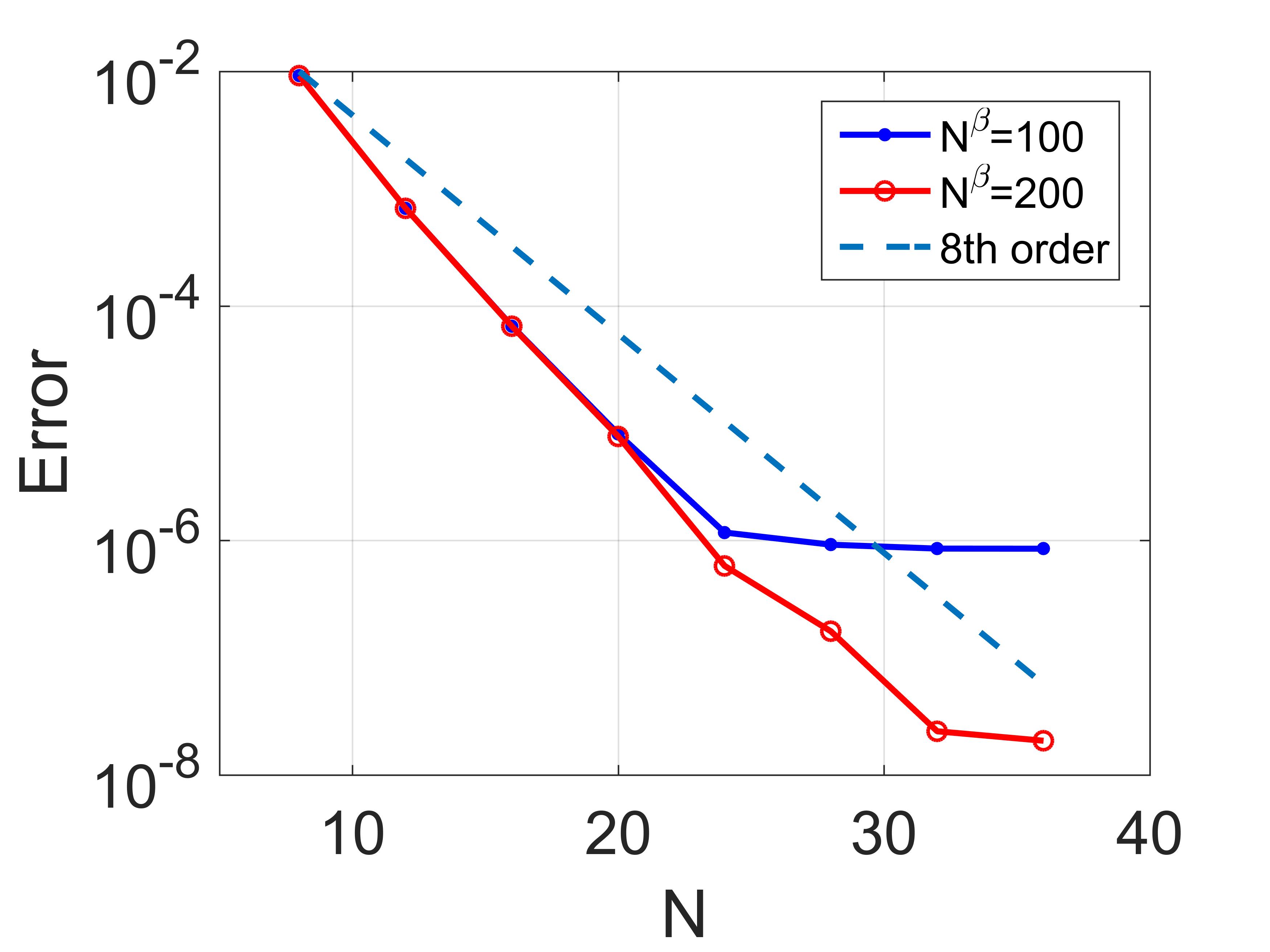} &
\includegraphics[scale=0.05]{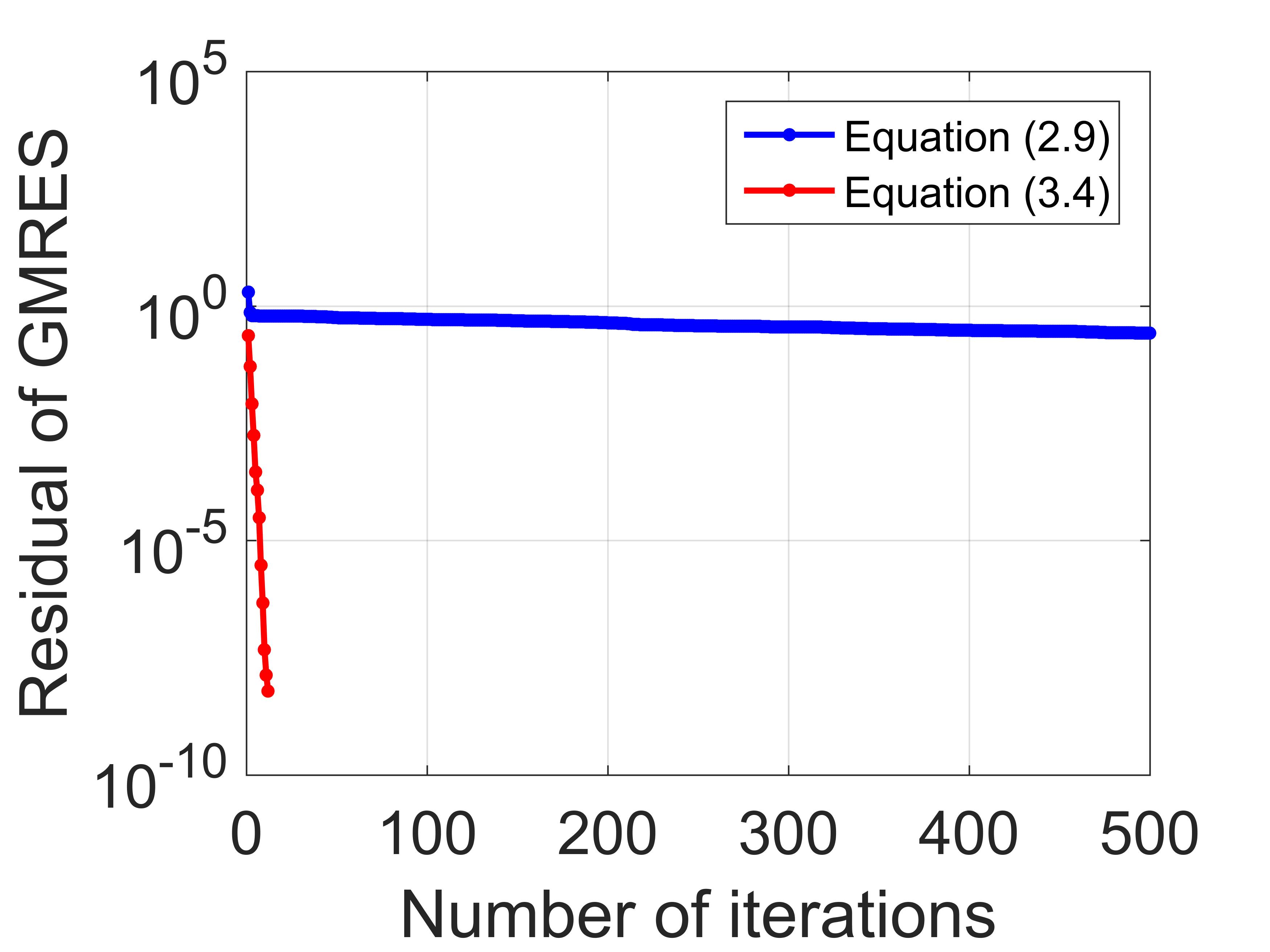} \\
(a) $\epsilon_\infty$ & (b) $\epsilon_r$\\
\end{tabular}
\caption{Numerical errors (a) and GMRES residual (b) for the problem
  of scattering by the ellipsoidal obstacle.}
\label{FigExp2.2}
\end{figure}

\begin{figure}[htbp]
\centering
\begin{tabular}{cc}
\includegraphics[scale=0.05]{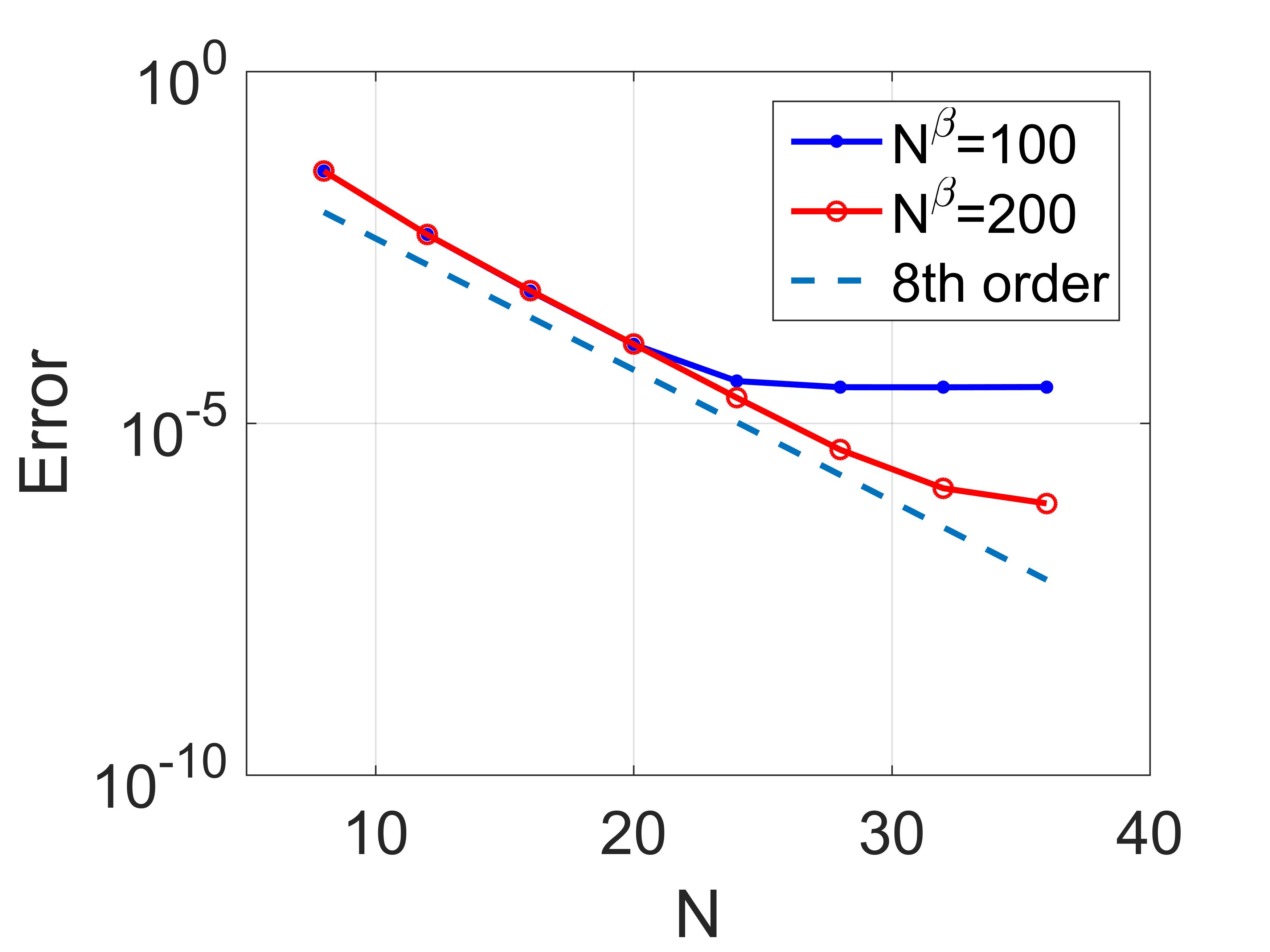} &
\includegraphics[scale=0.05]{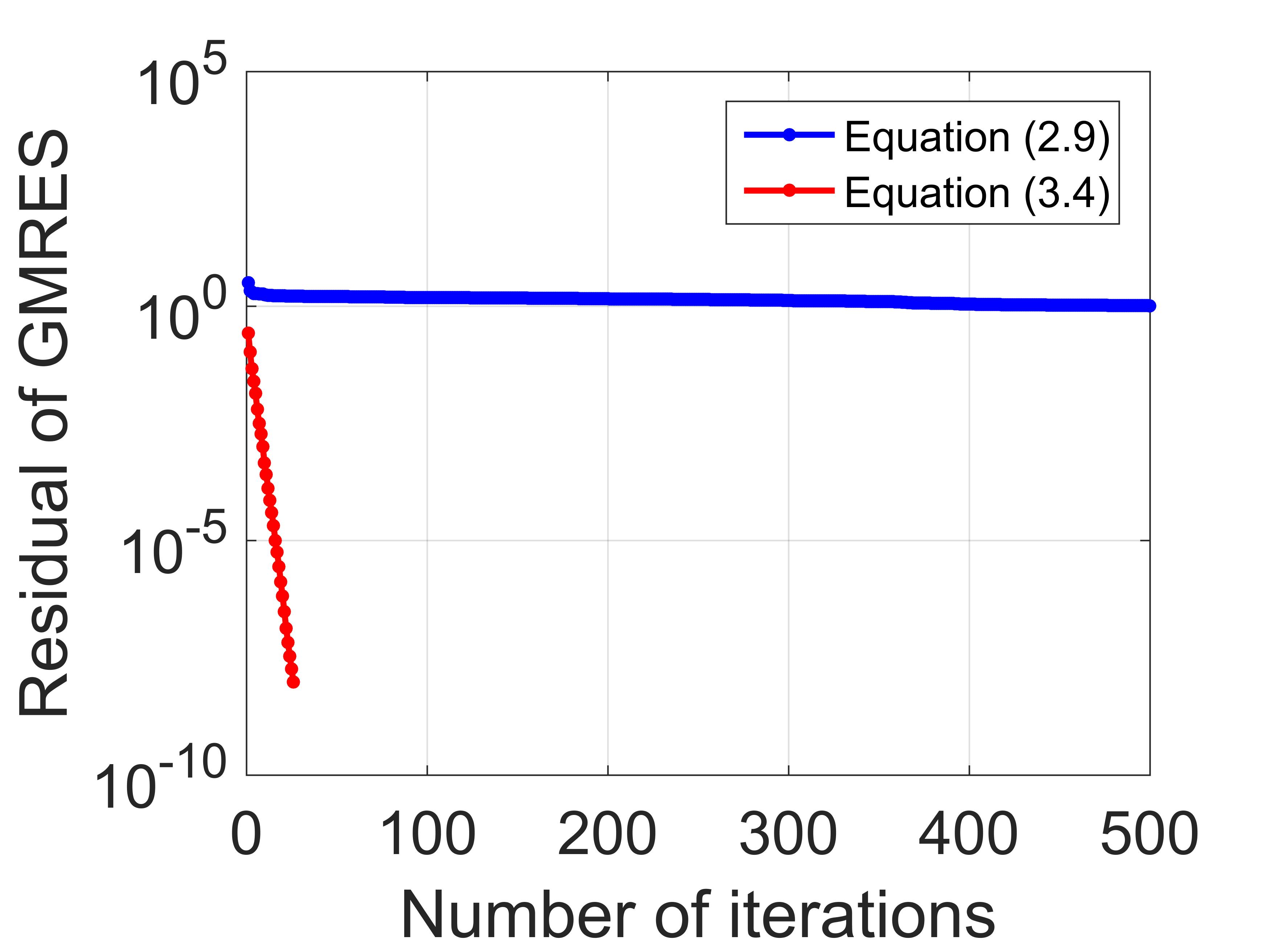} \\
(a) $\epsilon_\infty$ & (b) $\epsilon_r$\\
\end{tabular}
\caption{Numerical errors (a) and GMRES residual (b) for the problem
  of scattering by the bean-shaped obstacle.}
\label{FigExp2.3}
\end{figure}

Next, we demonstrate the high accuracy and rapid convergence of the
proposed closed-surface elastic scattering method via applications to
the three bounded obstacles depicted in Figure~\ref{obstacle}. In each
case boundary conditions were used for which the exact solution is
given by an pressure point source located at a point $z$ within the
ball: \ben
u^\mathrm{ex}(x)=\frac{1}{k_p}\nabla_x\frac{e^{ik_p|x-z|}}{4\pi|x-z|}.
\enn (While not physically motivated, this exact solution and
associated boundary conditions provide a commonly used test for
evaluation of the accuracy of the scattering solver.) The source was
assumed to be located at $z=(0,0.5,0.3)$ for the spherical scatterer,
and at $z=(0,0,0)$ for the ellipsoidal and bean-shaped
obstacles. Figures \ref{FigExp2.1}(a), \ref{FigExp2.2}(a) and
\ref{FigExp2.3}(a) display the errors in the numerical solution for
the frequency $\omega=2\pi$, produced by means of the regularized
integral equation (\ref{BIER1}), as a function of $N$. In all three
cases $M=6$ patches were used, together with two different values of
the rectangular-integration parameter, namely $N^\beta = 100$ and
$N^\beta = 200$. These figures clearly demonstrate the fast
convergence and high accuracy of the algorithm. Figures
\ref{FigExp2.1}(b), \ref{FigExp2.2}(b) and \ref{FigExp2.3}(b), in
turn, display the GMRES residuals as functions of the number of
iterations, in the numerical solution of the un-regularized
(resp. regularized) integral equation~(\ref{BIE1})
(resp. (\ref{BIER1})), for which we used $N=36$ and
$N^\beta=200$. Clearly, use of the regularized equation is highly
beneficial: using only 19, 12 and 28 iterations the solver achieves
the GMRES tolerance $\epsilon_r = 1\times10^{-8}$ for the spherical,
ellipsoidal and bean-shaped obstacles, respectively. This is in
striking contrast with the numbers of iterations required by the
implementation based on the unregularized equation, which are also
displayed in these figures. Table \ref{TableExp2.1} presents the
numerical solution errors together with other statistics such as
precomputation time, time per iteration and number of iterations used
for a problem of scattering at frequency $\omega=10\pi$ on the basis
of six $5\times5$ patches. At this frequency, $N=8$ (resp. $N=16$)
suffices to produce an accuracy $4.67\times10^{-3}$
(resp. $1.39\times10^{-6}$).

\begin{table}[htb]
  \caption{Numerical errors in the numerical total field for the problem of scattering by a sphere of diameter $10\lambda_s$ produced by the solver based on the regularized equation (\ref{BIER1}) .}
\centering
\begin{tabular}{|c|c|c|c|c|c|c|c|}
\hline
$N$ & $N^\beta$ & $N_{\mathrm{DOF}}$  & Time (prec.) & Time (1 iter.)& $N_{iter}$ ($\epsilon_r$) & $\epsilon_\infty$\\
\hline
 8  & 50  & $3\times9,600$  & 11.63 s & 8.32 s & 22 ($9.41\times10^{-4}$) & $4.77\times10^{-3}$ \\
\hline
 8  & 100 & $3\times9,600$  & 43.22 s & 8.32 s & 21 ($9.88\times10^{-4}$) & $4.67\times10^{-3}$ \\
\hline
16  & 50  & $3\times38,400$ & 1.09 min & 2.14 min & 32 ($1.05\times10^{-5}$) & $1.32\times10^{-4}$ \\
\hline
16  & 100 & $3\times38,400$ & 3.52 min & 2.16 min & 34 ($9.73\times10^{-7}$) & $1.39\times10^{-6}$ \\
\hline
\end{tabular}
\label{TableExp2.1}
\end{table}

We next consider the plane-wave incident pressure field \be
\label{pplane}
u^{inc}=de^{ik_px\cdot d}, \quad
d=(\sin\theta_1\cos\theta_2,\sin\theta_1\sin\theta_2, \cos\theta_1),
\en where $(\theta_1,\theta_2)$ denote the polar and azimuthal
incidence angles. For our example we use the two pairs of angles
$\theta_1=\pi/2$, $\theta_2=0$ and $\theta_1=\pi$, $\theta_2=0$ for
the spherical, and bean-shaped obstacles, respectively, and we take
$\omega=10\pi$, $M=6\times 5\times 5 =150$ patches (the six original
patches subdivided into $5\times5$ each), and $N=16$ (for a total
number $N_{\mathrm{DOF}}=115200$ of degrees of freedom in the
problem). Figures \ref{FigExp2.5} and \ref{FigExp2.6} display the
resulting numerical solutions.

\begin{figure}[ht]
\centering
\begin{tabular}{ccc}
\includegraphics[scale=0.05]{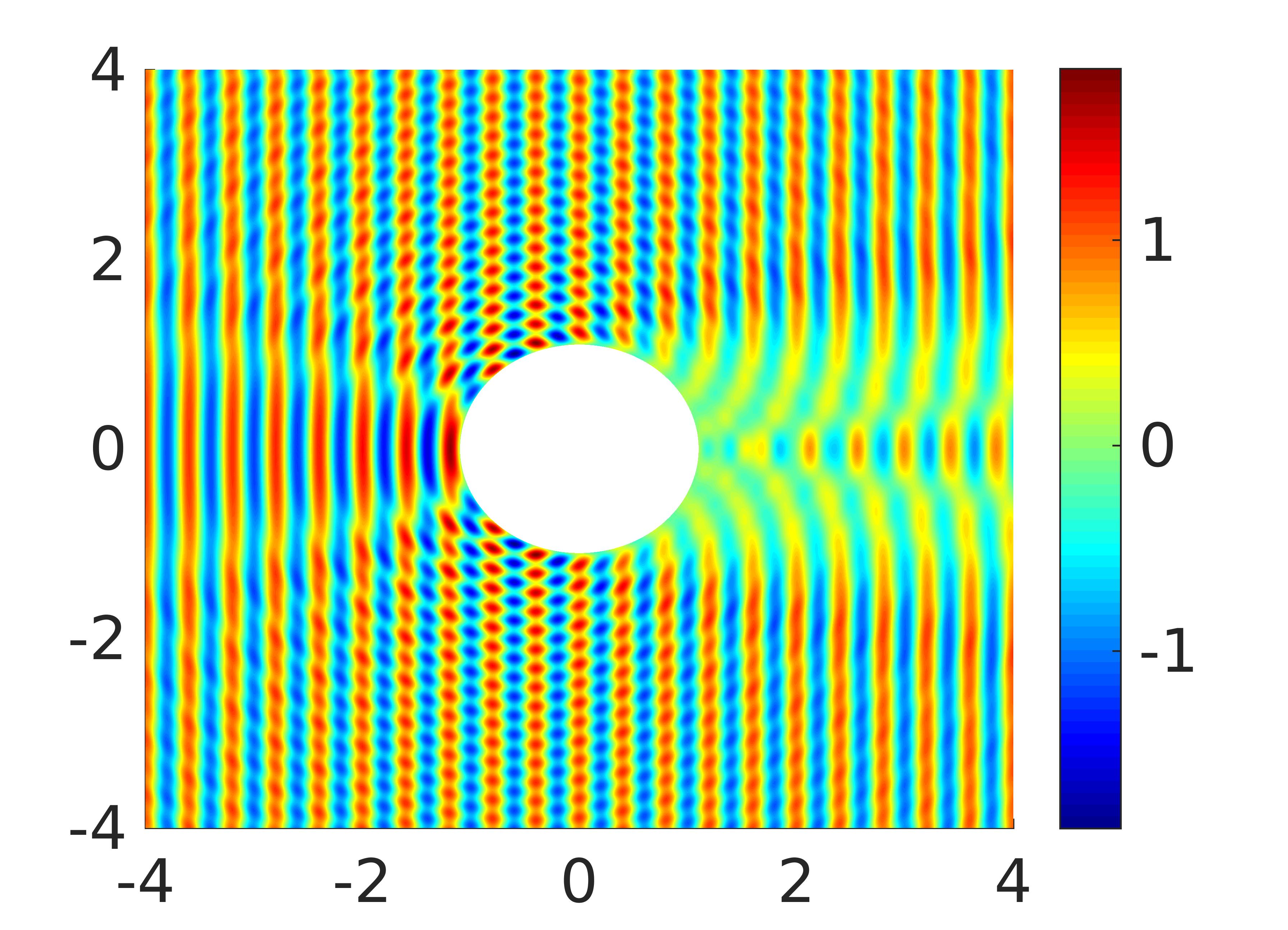} &
\includegraphics[scale=0.05]{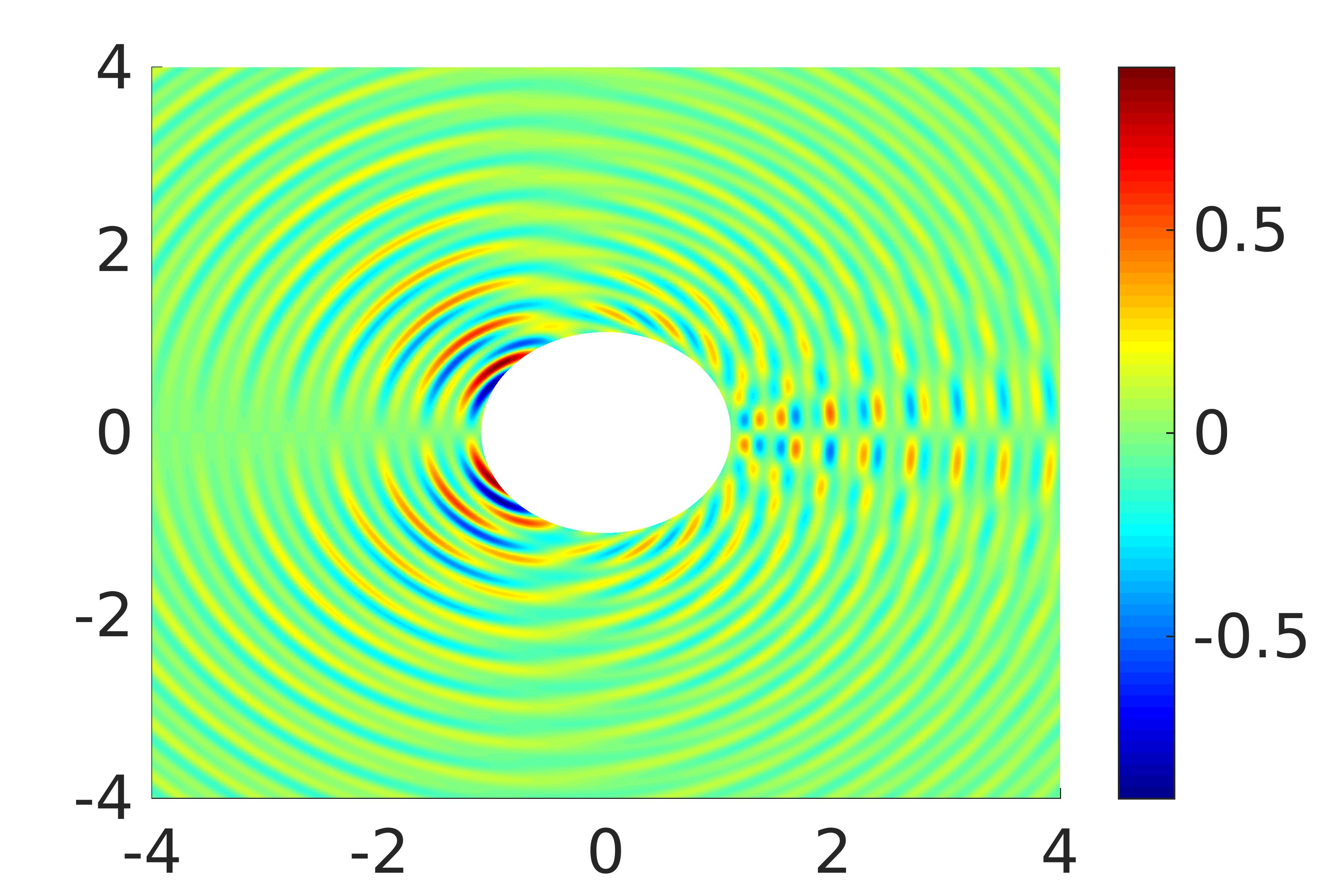} \\
(a) $\mbox{Re}(u^{\mathrm{num}}_1)$ & (b) $\mbox{Re}(u^{\mathrm{num}}_2)$ \\
\includegraphics[scale=0.05]{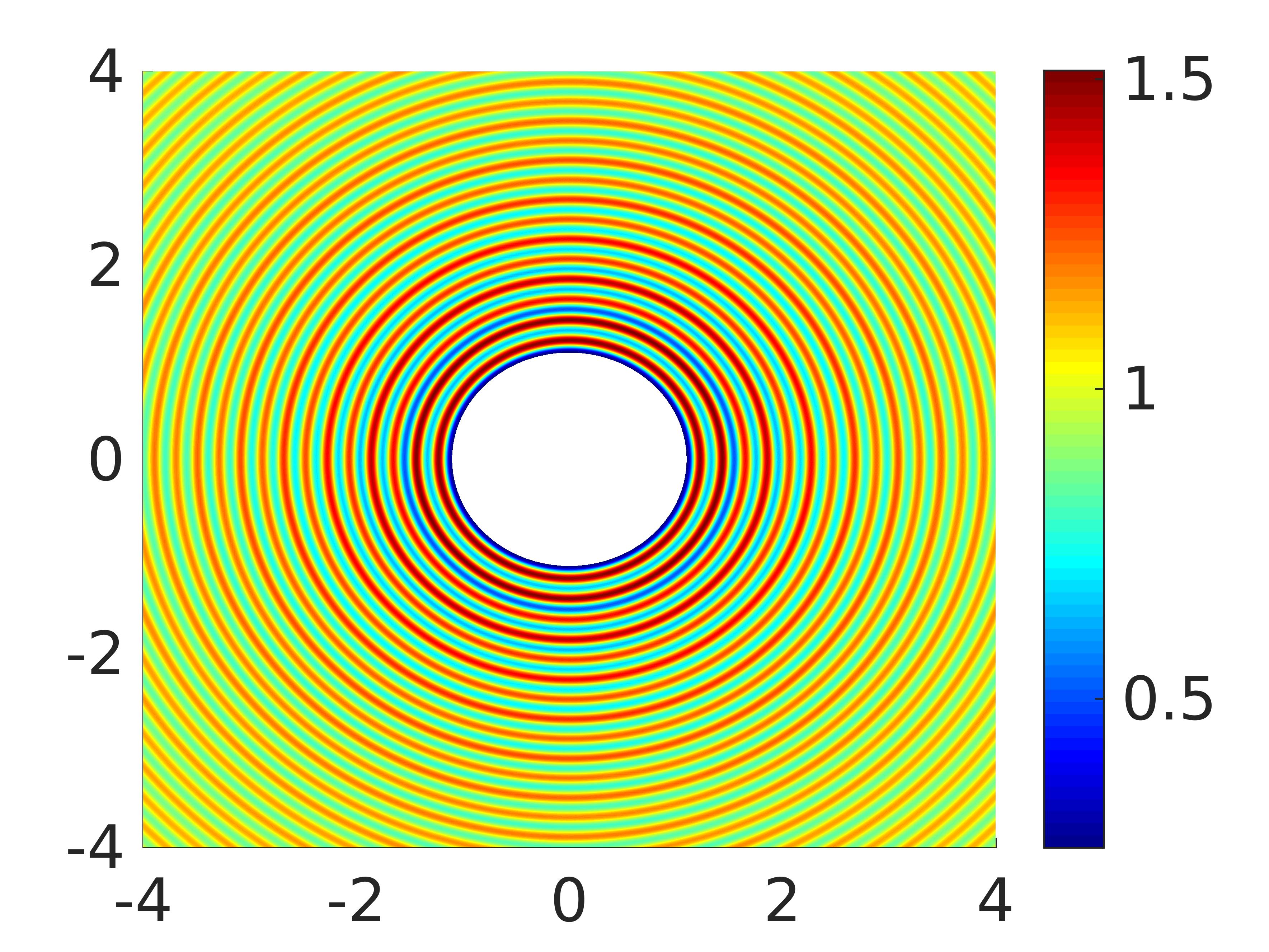} &
\includegraphics[scale=0.05]{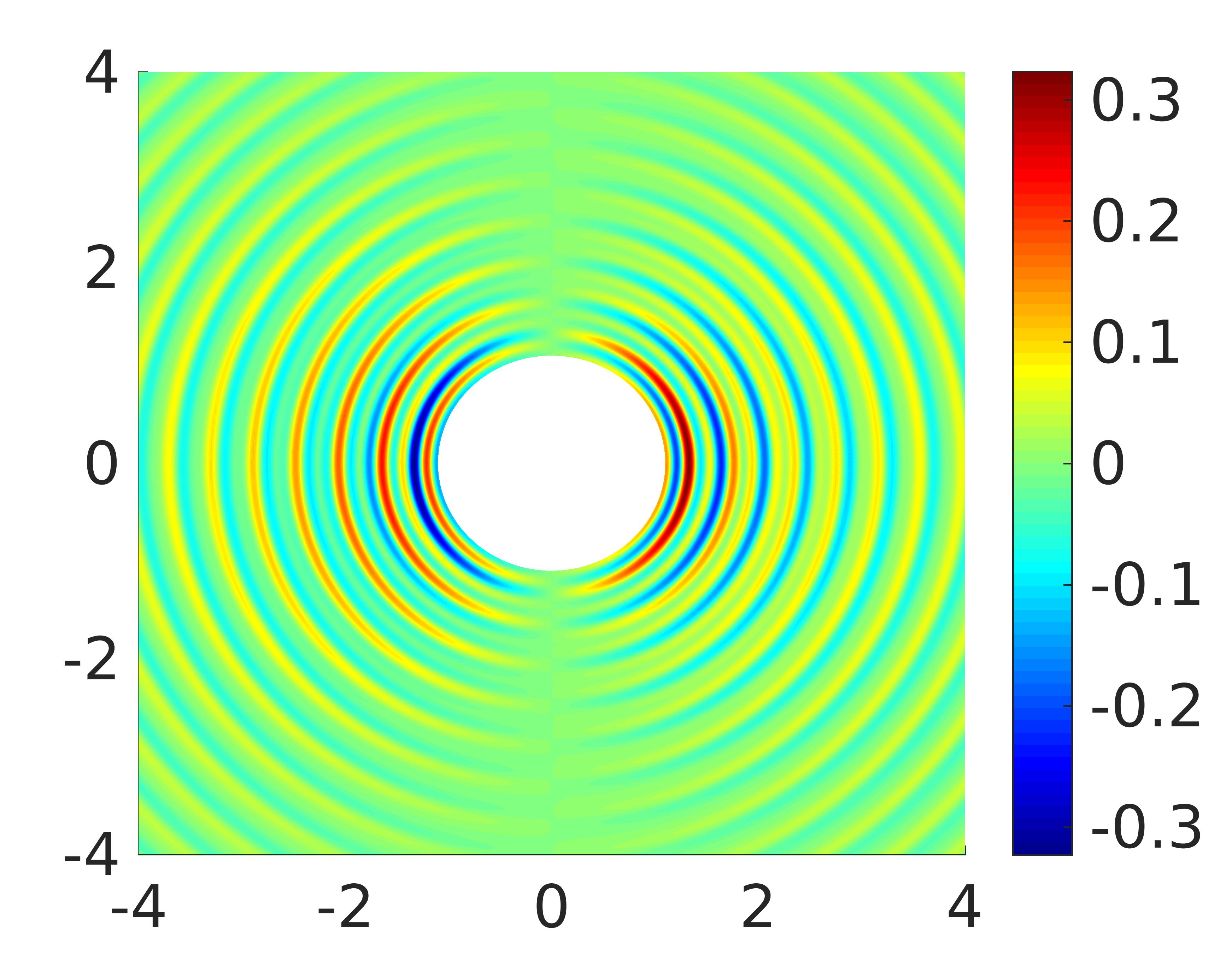} \\
(c) $\mbox{Re}(u^{\mathrm{num}}_1)$ & (d) $\mbox{Re}(u^{\mathrm{num}}_2)$ \\
\end{tabular}
\caption{Real parts of the components $u_1$ and $u_2$ of the total
  field $u$ on an $x_3=0$ section (Figs. (a) and (b)) and an $x_1=0$
  section (Figs. (c) and (d)) for the scattering of a plane-wave
  pressure incident field, with incidence angles $\theta_1=\pi/2$,
  $\theta_2=0$, by the spherical obstacle. A total of forty-seven
  iterations sufficed in this case for the solver to reach the GMRES
  residual tolerance value $\epsilon_r=1\times10^{-4}$.}
\label{FigExp2.5}
\end{figure}

\begin{figure}[ht]
\centering
\includegraphics[scale=0.06]{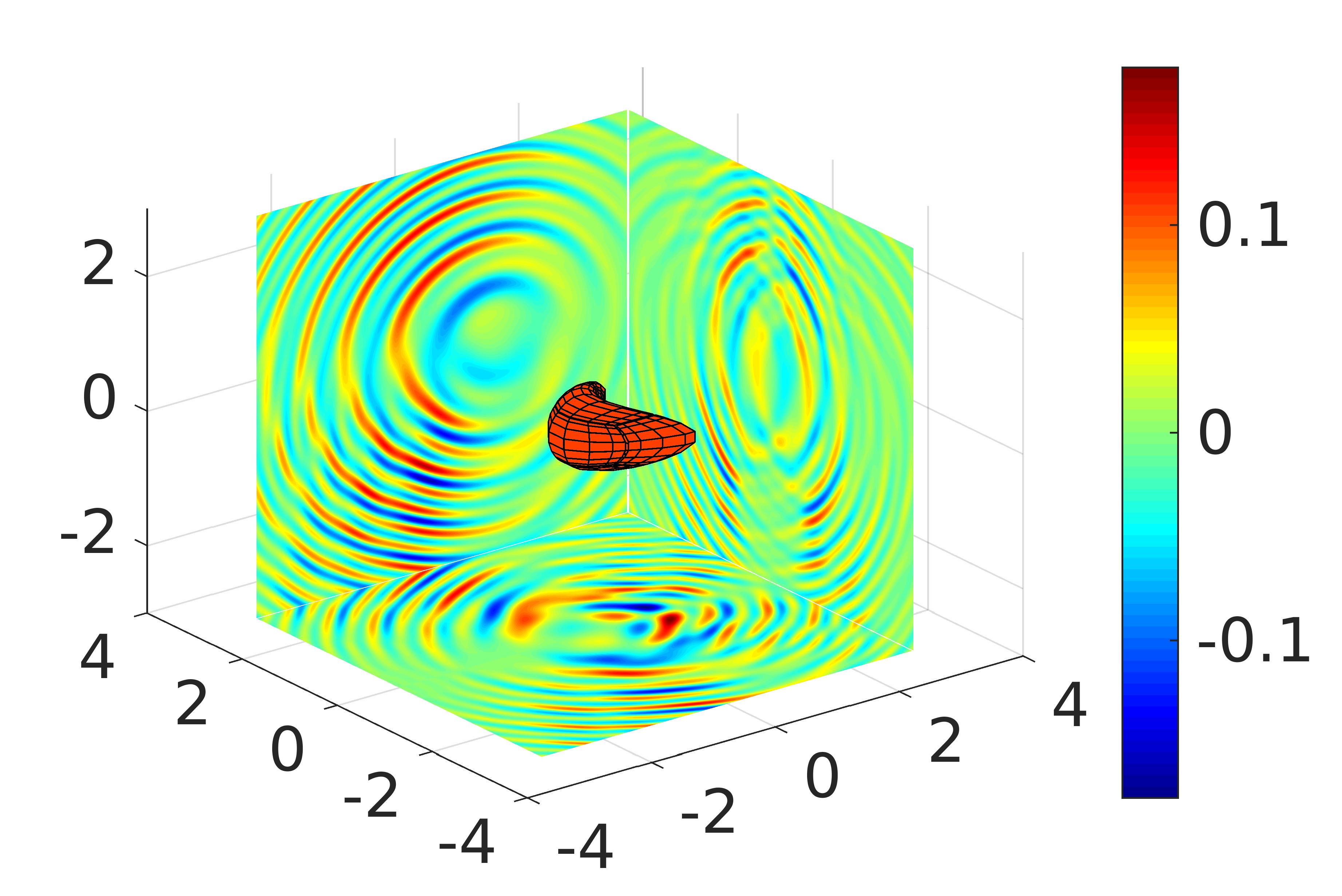} \\
(a) $\mathrm{Re}(u^{\mathrm{num}}_1)$ \\
\begin{tabular}{ccc}
\includegraphics[scale=0.06]{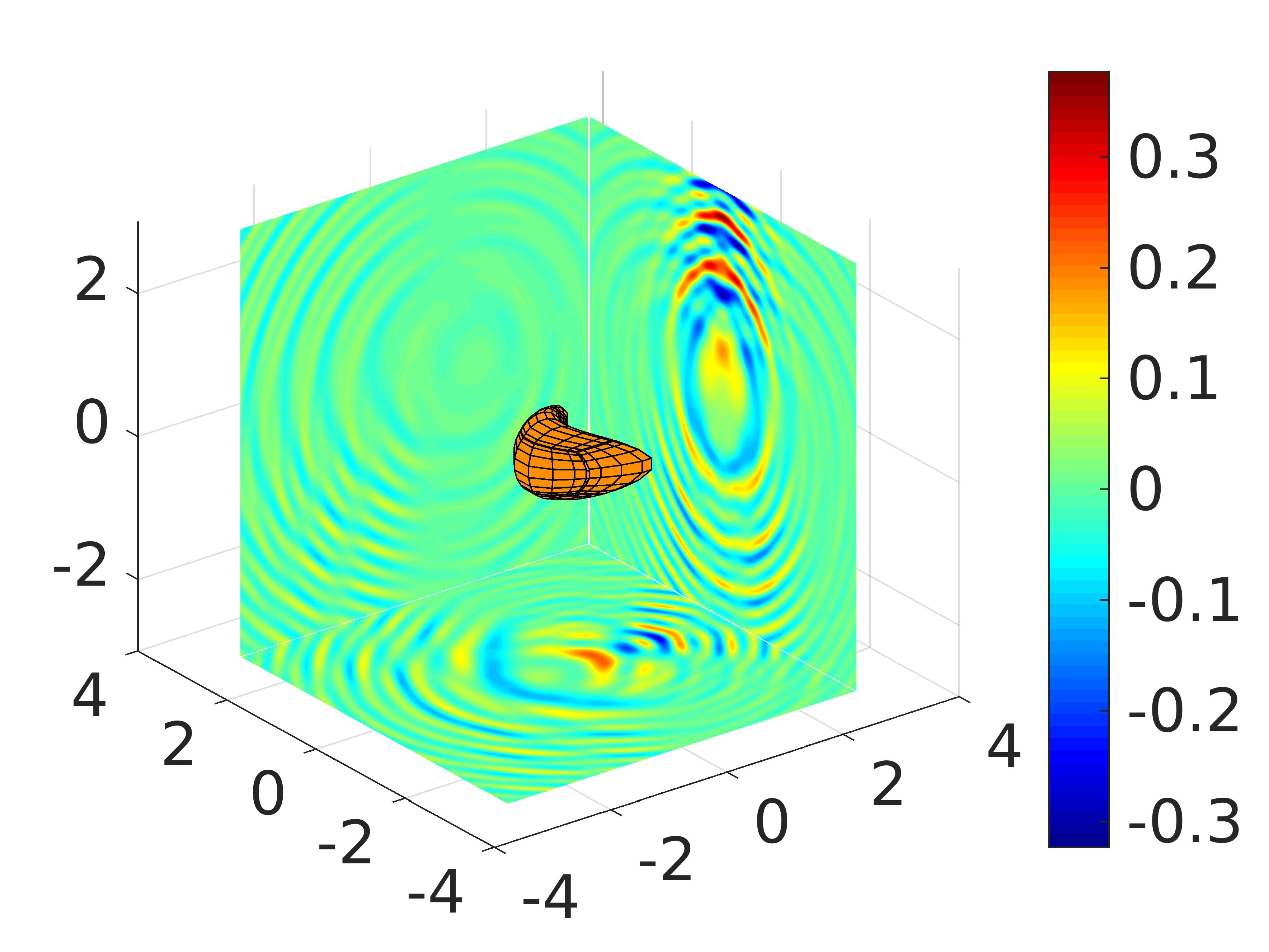} &
\includegraphics[scale=0.06]{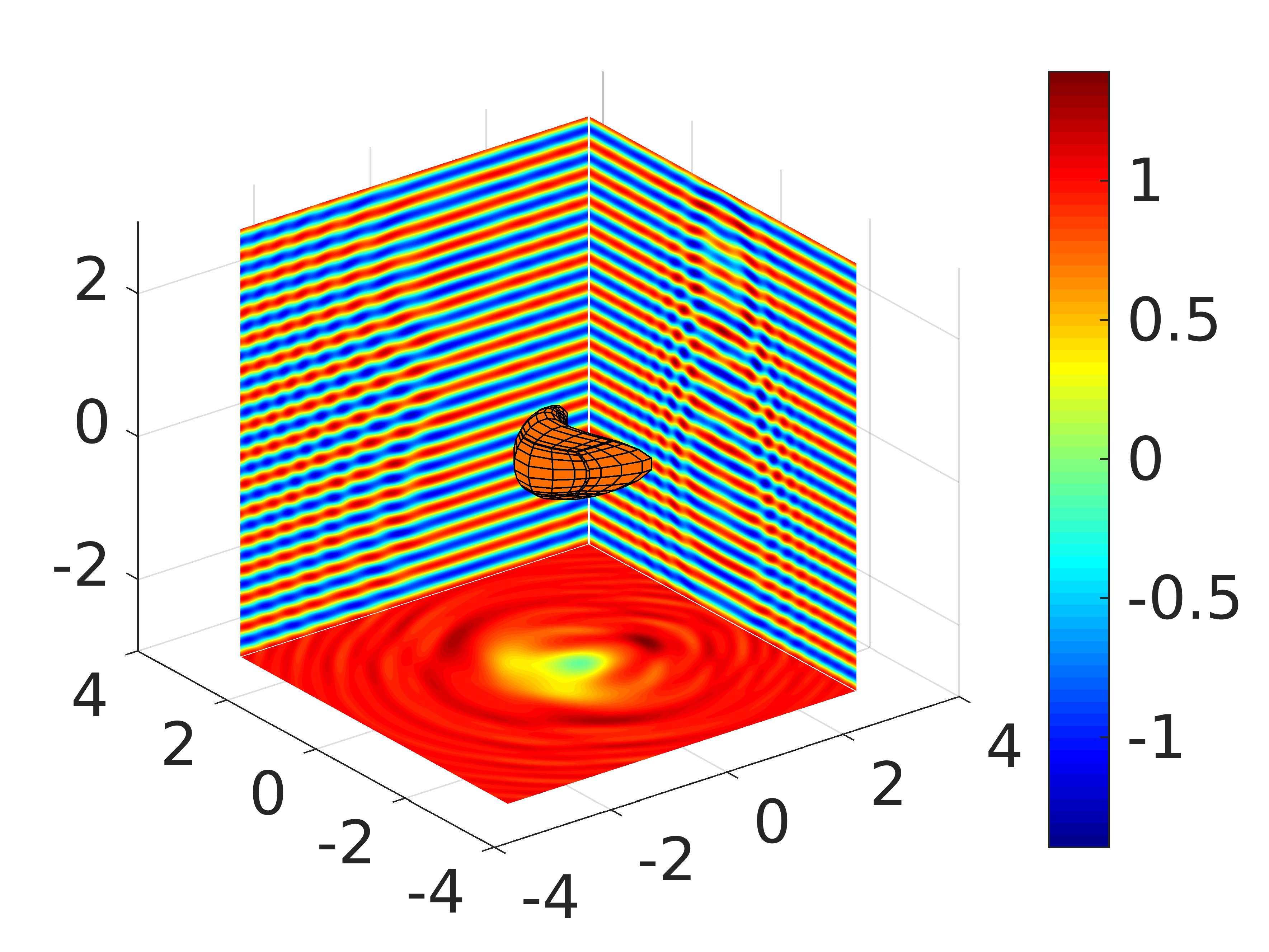} \\
(b) $\mathrm{Re}(u^{\mathrm{num}}_2)$ & (c) $\mathrm{Re}(u^{\mathrm{num}}_3)$\\
\end{tabular}
\caption{Scattering of a plane-wave pressure incident field with
  incident angles $\theta_1=\pi$ and $\theta_2=0 $ by the bean-shaped
  obstacle. Total field. The solver required eighty-one iterations to
  reach the GMRES tolerance value $\epsilon_r= 1\times10^{-4}$.}
\label{FigExp2.6}
\end{figure}


Finally, we consider the problem of elastic scattering by a unit disc
\ben x^2+y^2\le 1,\quad z=0, \enn for which the weight function
$w=\sqrt{1-x^2-y^2}$ was used, under plane pressure incidence
field~(\ref{pplane}) with incidence angles $\theta_1=\pi$ and
$\theta_2=0$. This problem can be tackled by means of either the
first-kind equation (\ref{BIE2}) or the regularized equation
(\ref{BIER2}). Figure \ref{FigExp3.1} displays the errors in the
numerical solution as a function of $N$, demonstrating once again fast
convergence and high accuracy. The values $M=5$ and $N^\beta=200$ were
used. Figures \ref{FigExp3.11} and \ref{FigExp3.12} present the
GMRES residuals as a function of the number of iterations for the
various formulations (\ref{BIE2}), (\ref{BIER2}), (\ref{BIED}) and
(\ref{BIEDR}). Clearly, the Neumann solver based on the regularized
integral equation (\ref{BIER2}) requires a significantly smaller
number of GMRES iterations, to meet a given GMRES tolerance
$\epsilon_r$, than the corresponding solver based on equation
(\ref{BIE2}). But for the Dirichlet problem, the regularized
equation~(\ref{BIEDR}) does not provide an improvement over equation
(\ref{BIED}): it actually requires a slightly larger number of
iterations in this case. The total computing cost of the regularized
equation (\ref{BIEDR}) is higher than (\ref{BIED}) in this case,
since the application of the operator $S_w$ is significantly less
expensive than the application of operator $N_wS_w$---and, thus, use
of the formulation based on the unregularized operator $S_w$ is
recommended for the Dirichlet case.  It is worth noting that, in
absolute computing times, the cost of evaluation of each open-surface
operator $N_w$ and $S_w$ is comparable, for a given overall number of
discretization points, to the cost required by the corresponding
closed-surface operators $N$ and $S$, respectively;
cf. e.g. Figure~\ref{FigExp3.1}. Figures \ref{FigExp3.2} and
\ref{FigExp3.3} display the total field scattered under the Neumann
and Dirichlet problem, respectively. In Figure \ref{FigExp3.2} the
famous Poisson spot is clearly visible at the center of the shadow
area of the third component of the field.
\begin{figure}[htbp]
\centering
\includegraphics[scale=0.25]{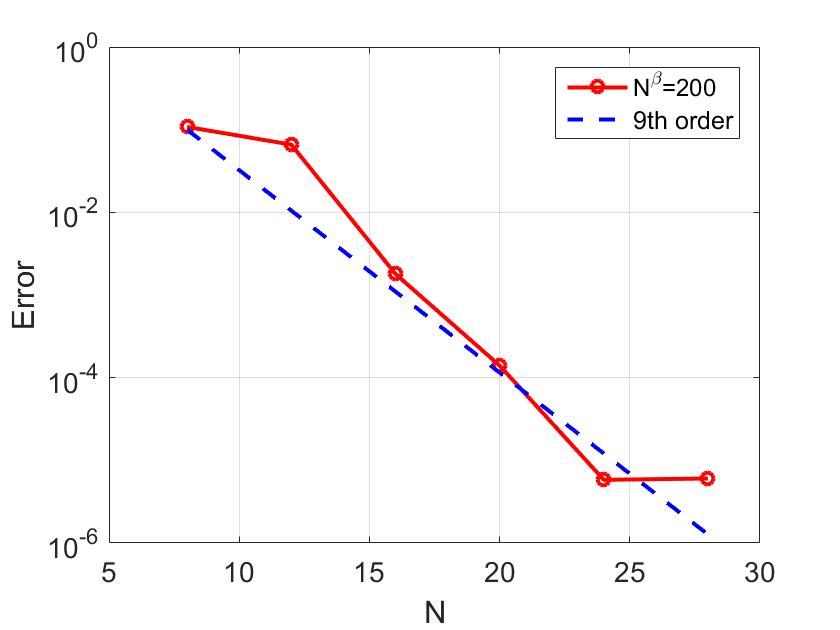}
\caption{Numerical errors for the problem of scattering by a unit disc
  of diameter $2\lambda_s$. The accuracy limitation at a level of
  approximately $10^{-5}$ corresponds to the choice $N^\beta = 200$;
  higher accuracies can be obtained by using suitably larger values of
  this precomputation-related parameter.}
\label{FigExp3.1}
\end{figure}

\begin{figure}[htbp]
\centering
\begin{tabular}{cc}
\includegraphics[scale=0.2]{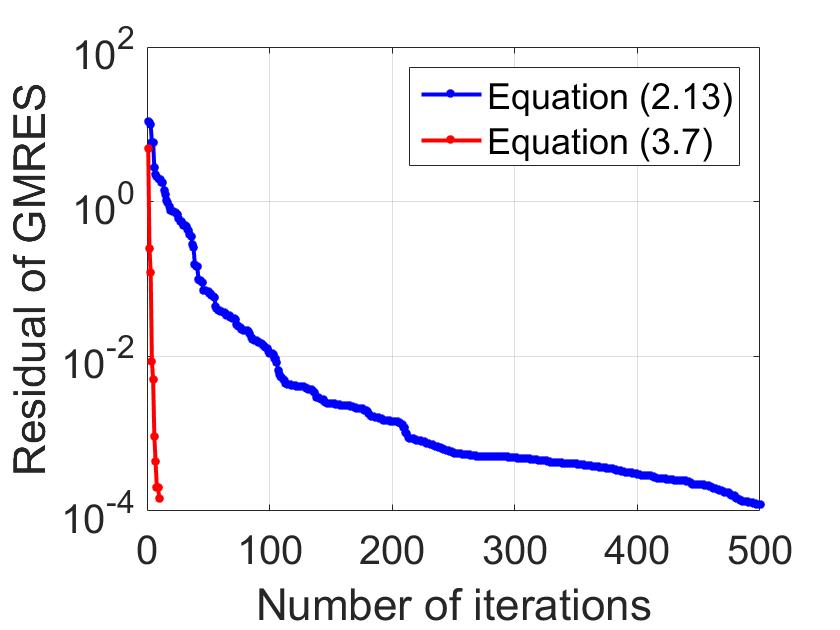} &
\includegraphics[scale=0.2]{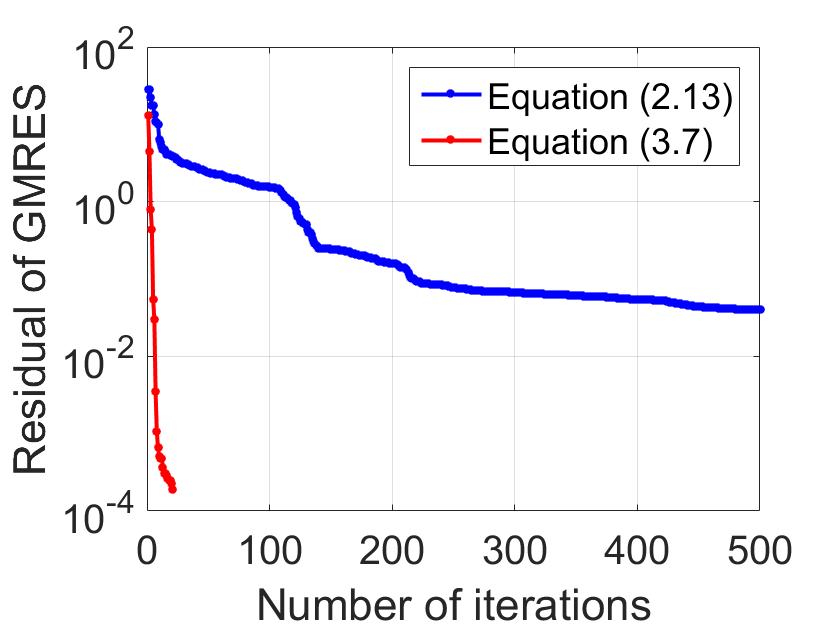} \\
(a) $N_{\mathrm{DOF}}=3\times1280$ & (b) $N_{\mathrm{DOF}}=3\times5120$ \\
\end{tabular}
\caption{GMRES residuals obtained in the solution of the Neumann
  problem of scattering by a unit disc with diameter (a) $2\lambda_s$
  and (b) $4\lambda_s$.}
\label{FigExp3.11}
\end{figure}

\begin{figure}[htbp]
\centering
\begin{tabular}{cc}
\includegraphics[scale=0.2]{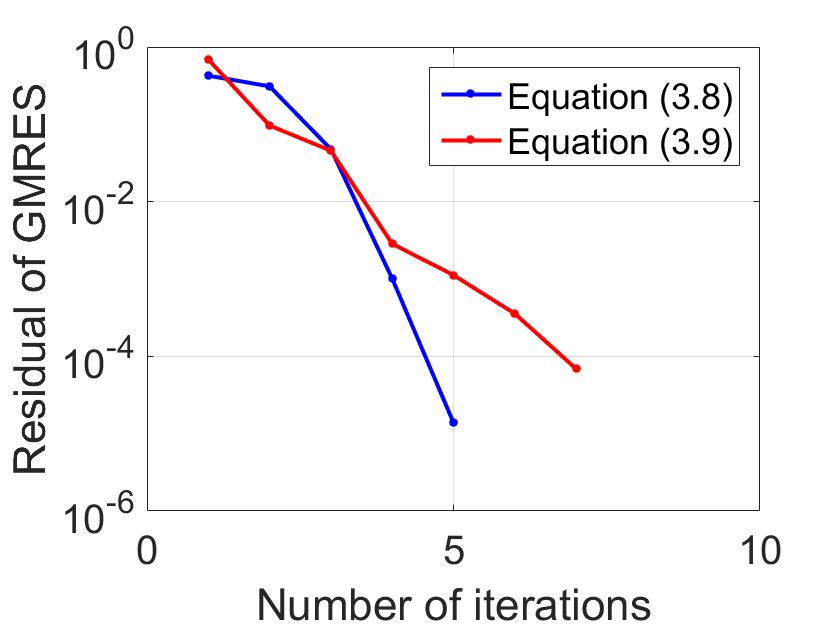} &
\includegraphics[scale=0.2]{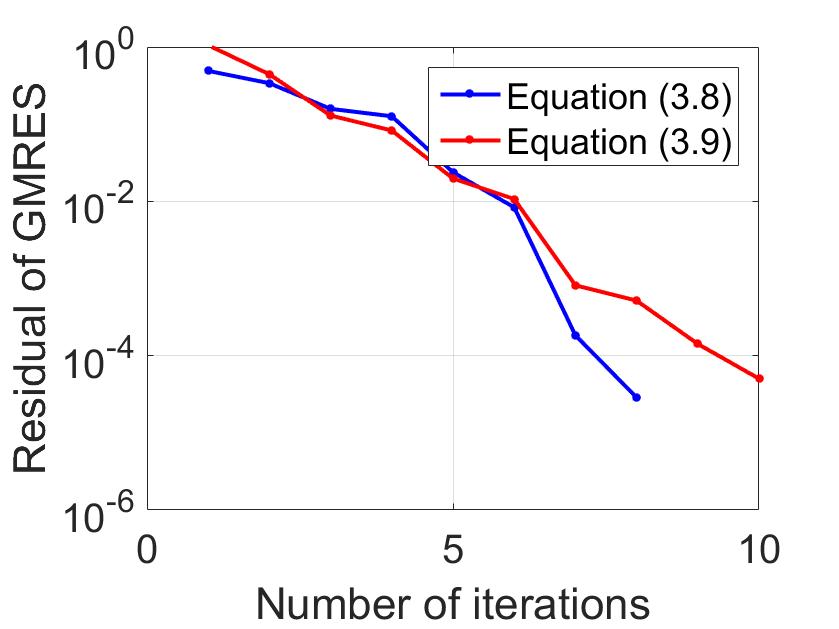} \\
(a) $N_{\mathrm{DOF}}=3\times1280$ & (b) $N_{\mathrm{DOF}}=3\times5120$ \\
\end{tabular}
\caption{GMRES residuals obtained in the solution of the Dirichlet
  problem of scattering by a unit disc with diameter (a) $2\lambda_s$
  and (b) $4\lambda_s$.}
\label{FigExp3.12}
\end{figure}

\begin{figure}[htbp]
\centering
\begin{tabular}{ccc}
\includegraphics[scale=0.3]{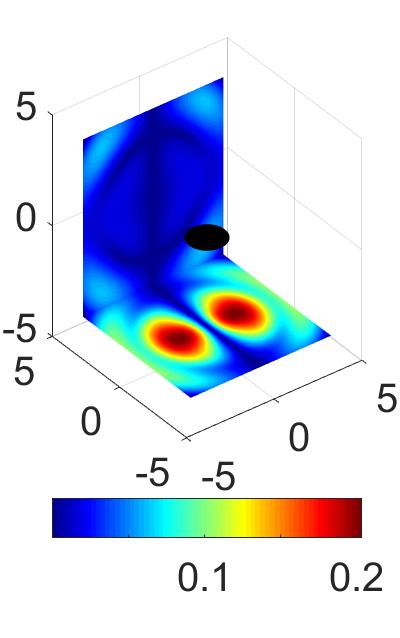} &
\includegraphics[scale=0.3]{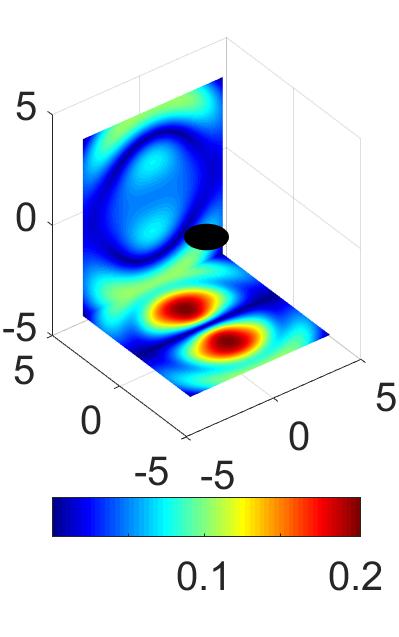} &
\includegraphics[scale=0.3]{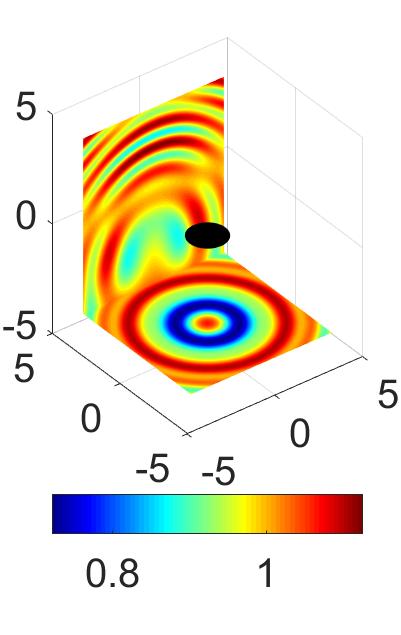} \\
(a) $|u^{\mathrm{num}}_1|$ & (b) $|u^{\mathrm{num}}_2|$ & (c) $|u^{\mathrm{num}}_3|$ \\
\end{tabular}
\caption{Scattering of a plane-wave pressure incident wave by a unit
  disc of diameter $2\lambda_s$. Total field under Neumann boundary
  conditions.}
\label{FigExp3.2}
\end{figure}

\begin{figure}[htbp]
\centering
\begin{tabular}{ccc}
\includegraphics[scale=0.3]{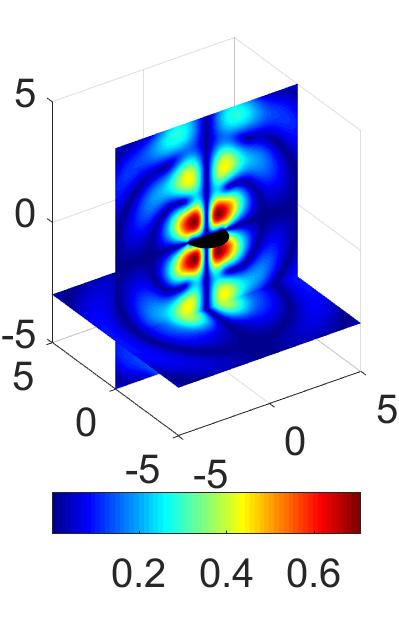} &
\includegraphics[scale=0.3]{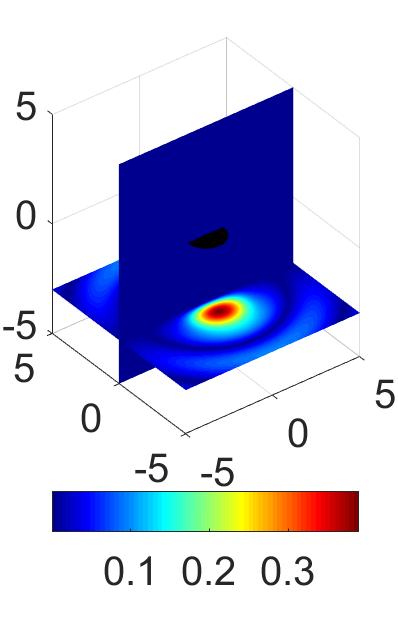} &
\includegraphics[scale=0.3]{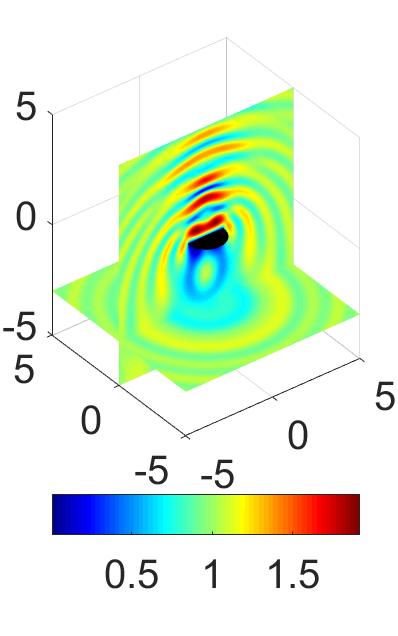} \\
(a) $|u^{\mathrm{num}}_1|$ & (b) $|u^{\mathrm{num}}_2|$ & (c) $|u^{\mathrm{num}}_3|$ \\
\end{tabular}
\caption{Scattering of a plane-wave pressure incident wave by a unit
  disc of diameter $2\lambda_s$. Total field under Dirichlet boundary
  conditions.}
\label{FigExp3.3}
\end{figure}

\section{Conclusions}
\label{sec:6}

This paper introduced novel regularized integral formulations and
associated fast high-order algorithms for the solution of 3D elastic
scattering problems with Neumann and Dirichlet boundary conditions on
closed and open surfaces. It was shown that the rectangular
integration method~\cite{BG18} and associated Chebyshev
differentiation strategies reliably provide high-order accuracies for
the weakly singular, strongly singular and hypersingular operators
associated with the closed- and open-surface formulations
considered. Relying on the newly studied Calder\'on formulation for 3D
elastic waves, the new integral operators inherent in the regularized
integral formulations enjoy excellent spectral properties and can give
rise to significantly reduced GMRES iterations numbers for a given
GMRES tolerance. For the problems with Dirichlet boundary conditions
on open surfaces, in turn, application of the weighted single-layer
operator is preferable. The regularized integral equation methods for
other scattering problems (for example, elastic transmission problems,
thermo- and porous-elastic problems, open-surface electromagnetic
problems) are left for future work.

\end{document}